\documentclass{article}
\usepackage[british]{babel}
\usepackage{amsfonts}
\usepackage{amsmath}
\usepackage{color}
\usepackage{hyperref}
\usepackage{graphicx}
\usepackage{float}
\newtheorem{theorem}{Theorem}

\newtheorem{corollary}[theorem]{Corollary}

\newtheorem{example}[theorem]{Example}

\newtheorem{proposition}[theorem]{Proposition}
\newtheorem{remark}[theorem]{Remark}

\newtheorem{model}[theorem]{Model}

\newenvironment{proof}[1][Proof]{\noindent\textbf{#1.} }{\ \rule{0.5em}{0.5em}}

\newcommand{\PP}{\mathbb{P}}

\begin{document}

\title{On smile properties of volatility derivatives and exotic products: understanding the VIX skew}
\author{Elisa Al\`{o}s \thanks
{Supported by grant MTM2016-76420-P.}
 \\
Dpt. d'Economia i Empresa\\
Universitat Pompeu Fabra\\
and Barcelona GSE\\
c/Ramon Trias Fargas, 25-27\\
08005 Barcelona, Spain 
\and David Garc\'ia-Lorite\\
CaixaBank\\
and Universitat de Barcelona\\
Av. Diagonal, 621-629, 08028 - Barcelona
\and Aitor Muguruza \thanks
{Supported by Centre for Doctoral Training in Financial Computing \& Analytics.}\\
NATIXIS\\
 and Department of Mathematics\\
Imperial College London\\
London SW7 2AZ\\}
\date{}

\maketitle

\begin{abstract}
We develop a method to study the implied volatility for exotic options and volatility derivatives with European payoffs such as VIX options. Our approach, based on Malliavin calculus techniques, allows us to describe the properties of the at-the-money  implied volatility (ATMI) in terms of the Malliavin derivatives of the underlying process. More precisely, we study the short-time behaviour of the ATMI level and skew. As an application, we describe the short-term behavior of the ATMI of VIX and realized variance options in terms of the Hurst parameter of the model, and most importantly we describe the class of volatility processes that generate a positive skew for the VIX implied volatility. In addition, we find that our ATMI asymptotic formulae perform very well even for large maturities. Several numerical examples are provided to support our theoretical results.

\vspace{0.2cm}

Keywords: Exotic options, Variance options, VIX, implied volatility, Malliavin calculus,  stochastic volatility models, rough volatility, fractional Brownian motion

\vspace{0.2cm}

AMS subject classification: 91G20,91G80,60H07

\end{abstract}

\section{Introduction}
La raison d' \^{e}tre of any financial model is to reproduce some behaviour or dynamic that is observed in the market. For instance, in the case of vanilla options this conundrum is translated into fitting the market implied volatility surface. To this end, several extensions of the Black-Scholes model have been proposed. In particular, one that effectively achieves this task is the Dupire local volatility model \cite{Dupire}. Nevertheless, when incorporating exotic products into the vanilla universe, one finds that the local volatility model does not properly reproduce the dynamics of the market (see \cite{SABR} for details on barrier options). A different and popular approach is to allow the volatility to be a stochastic process itself (see for example Hull and White \cite{HullWhite},  Scott \cite{Scott}, Heston \cite{Heston}, Stein and Stein \cite{Stein} and Ball and Roma \cite{BallRoma}). It is well known that classical stochastic volatility (SV) models, where the volatility follows a diffusion process, can explain some important properties of the implied volatility, as its variation with respect to the strike price, described graphically as the smile or skew (see Renault and Touzi \cite{RenaultTouzi}) and the leverage effect. Most importantly, aside from static properties they provide realistic dynamics of the spot in order to price exotic products. In spite of all this facts, this first generation of SV models does not capture some other important features of market data, as the term structure (dependence on the time to maturity) of the implied volatility. In an attempt to fix this issue, Bergomi \cite{Bergomi} introduced a second generation of SV models coupled with the concept of forward variance, which allows time dependent structures of forward volatilities. Then one may successfully solve the calibration problem by considering a local stochastic volatility (LSV) model as suggested by Lipton \cite{Lipton} and obtain a model that both fits the implied volatility surface and has realistic dynamics on the spot.
\\\\
In spite of the popularity of LSV models, the addition of volatility derivatives in the picture dramatically complicates the calibration and pricing. Concretely, in a LSV model the value of a log-contract (or idealised variance swap) is given by
\begin{equation}\label{log-contract}
\frac{1}{T}E\int_0^T v_u (\sigma(S_u,u))^2 du, \quad T>0
\end{equation}
where $v$ represents the pure stochastic volatility process, $S$ represents the spot and $\sigma(\cdot, \cdot)$ is the local volatility component. In this setting, the sole pricing of VIX options or even futures (involving conditional expectations and nonlinearities) is extremely involved. This is the reason why Bergomi \cite{Bergomi} type of SV models are gaining popularity i.e.  their friendly structure of the log-contract in \eqref{log-contract} with $\sigma(\cdot, \cdot)=1$.
\\\\
With this picture in mind, the main challenge on equity markets is to jointly fit spot vanilla and VIX smiles, whilst providing realistic dynamics on both, spot and volatility processes. In order to construct models that allow us to describe this complexity, it is important to develop tools that allow us to identify the  class of models that are able to generate the desired behaviour. One first step in this direction was developed in Al\`os, Le\'on and Vives \cite{AlosLeonVives}, where the authors described the short-time behavior of the at-the-money implied volatility (ATMI) skew  in terms of the Malliavin derivative operator of the SV process. This result establishes that the  blow-up in this slope ( observed in real market data), can be described by a volatility process $\sigma$ such that $D_s\sigma_r\to \infty$ as $s\to r$, where $D$ denotes the Malliavin derivative operator (see for example Nualart (2005)). This property is satisfied by stochastic volatility models based on the fractional Brownian motion (fBm) with Hurst parameter $H<\frac12$.
\\\\
This observation has lead to the recent development of rough volatility models (see for example Fukasawa \cite{Fukasawa} or Bayer, Friz and Gatheral \cite{BFG15}). Remarkably, rough volatility models not only provide a realistic implied volatility surface, but also agree with the historical dynamics of volatility as shown by Gatheral, Jaisson and Rosenbaum \cite{GJR18} and Bennedsen, Lunde and Pakkanen \cite{BLP16}. We must emphasize here that neither of the first nor second generation of SV models were found to agree with the historical dynamics of volatility. Not surprisingly though, there is a price to pay for such innovative model; Markovianity. In order to overcome the lack of Markovianity and the lost of classical tools (such as PDE's and It\^o's lemma) a vast literature has emerged from this active area of research (see \cite{AlosShiraya}, \cite{BFGHS}, \cite{GJR14}, \cite{Neuman}, \cite{JMM17}, \cite{MP17}, \cite{HJL} and \cite{JPS17} among others).
\\\\
Our main objective in this paper is to study the ATMI short-time level and skew of realized variance (RV) options and VIX options. As a first step, we will see that exotic option prices on a underlying $A$ coincide with vanilla option prices where the underlying is a SV model where the volatility is determined by the Malliavin derivative of the underlying process $A$. This will allow us to apply previous results on the implied volatility level and skew (see Al\`os and Shiraya \cite{AlosShiraya} and Al\`os, Le\'on and Vives \cite{AlosLeonVives}). Our results provide a method based on the techniques of Malliavin calculus to estimate the ATMI rate of the short-dated level and skew. In particular, we will see that, if $D_s\sigma_r=O(s-r)^{H-\frac12}$, for some $H\in \left(\frac12,0\right)$, the ATMI level and skew are of the order $O((T)^{H-\frac12})$ for RV options and of the order $O(1)$ for VIX options. Moreover, we develop an easy-to-apply criteria to determine the class of stochastic volatility models such that the corresponding VIX skew is positive, as observed in real market data. This simple tool allows us to check that the VIX skew is negative for the Heston model and zero for the SABR, while it becomes positive with a "mixing lognormals" solution. This coincides with previous results (see Baldeaux and Badran \cite{BB} and Bergomi \cite{Bergomi3}).
\\\\
VIX options and futures are becoming increasingly popular both in the industry and academic research (see \cite{CarrMadan} and \cite{HJT} for instance). We find that our results along with Bergomi \cite{Bergomi3} and De Marco \cite{DeMarco} (in the rough volatility context)  combine the necessary conditions to generate models with positive VIX skew in forward variance form. Concretely, Proposition \ref{prop:fbmATMISkewVIX} gives such conditions for a large family of models covered by both authors. In addition, we find that our asymptotic formulas yield accurate approximation for maturities up to 6 months that could help understand the joint dynamics of VIX and S\&P 500, also analysed by Guyon \cite{Guyon} using a different approach.
\\\\
This paper is organized as follows. In Section 2 we introduce some basic concepts on Malliavin calculus. Section 3 is devoted to see how exotic option prices on a underlying $A$ coincide with vanilla option prices where the underlying is a stochastic volatiliy model where the volatility is determined by the Malliavin derivative of the process $A$. As a consequence, in Section 4, we obtain our results for the ATMI level and skew. In section 5, we present a family of models that we will use in subsequent sections. Finally, the cases of VIX and RV options are studied in Sections 6 and 7, respectively.

\section{Preliminaries on Malliavin calculus}
We assume that the reader is familiar with the elementary notions of
Malliavin calculus, as given for instance in Nualart \cite{Nualart}. Consider a Brownian motion $W$ defined on a probability space $(\Omega, \mathcal{F},\mathbb{P})$. The set $\mathbb{D}_{W}^{1,2}$ will denote the domain of the derivative operator $%
D$ with respect to the Brownian Motion $W$. It is well-known that $\mathbb{D}_{W}^{1,2}$ is a dense subset of $%
L^{2}(\Omega)$ and that $D$ is a closed and unbounded operator from $%
L^{2}(\Omega)$ into $L^{2}([0,T]\times\Omega).$ We will also consider the
iterated derivatives $D^{n},$ for $n>1,$ whose domains will be denoted by 
$\mathbb{D}_{Z}^{n,2}.$  We will also make use of the notation $\mathbb{L}^{n,2}:=L^{2}([0,T];\mathbb{D}_{Z}^{n,2}).$

\vspace{2mm}

Consider now a process $X$ given by the equation 
\begin{equation}
\label{model1}
dX_t=\phi^2_t dW_t-\frac12 \phi^2_tdt,
\end{equation}
where $\phi_t$ is a positive and square integrable process adapted to the filtration generated by $W$. We will make use of the following anticipating It\^o's formula (see for example Al\`os (2006)).

\begin{proposition}
\label{Ito}
Consider the model (\ref{model1}) and define the process $Y$ as $Y_t:=\int_t^T \phi^2_s ds$. Let
$F:[0,T]\times \mathbb{R}^{2}\rightarrow \mathbb{R}$ be a function
in $C^{1,2} ([0,T]\times \mathbb{R}^{2})$ such that there exists a
positive constant $C$ such that, for all $t\in \left[ 0,T\right]
,$ $F$ and its partial derivatives evaluated in $\left(
t,X_{t},Y_{t}\right)$ are bounded by $C.$ Then it follows that
\begin{eqnarray}
F(t,X_{t},Y_{t}) &=&F(0,X_{0},Y_{0})+\int_{0}^{t}{\partial _{s}F}%
(s,X_{s},Y_{s})ds \nonumber\\
&&-\frac12\int_{0}^{t}{\partial _{x}F}(s,X_{s},Y_{s})\phi^2_sds\nonumber\\
&&+\int_{0}^{t}{\partial _{x}F}(s,X_{s},Y_{s})\sqrt{\phi^2 _{s}} dW_{s} \nonumber\\
&&-\int_{0}^{t}{\partial _{y}F}(s,X_{s},Y_{s})\phi^2_{s}ds+
\int_{0}^{t}{\partial _{xy}^{2}F}(s,X_{s},Y_{s})\Theta _{s}ds \nonumber\\
&&+\frac{1}{2}\int_{0}^{t}{\partial
_{xx}^{2}F}(s,X_{s},Y_{s})\phi^2_{s}ds ,
\label{aito}
\end{eqnarray}
where $\Theta _{s}:=(\int_{s}^{T}D^W_{s}\phi_{r}^{2}dr)\phi _{s}.$ 
\end{proposition}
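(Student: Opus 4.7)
My plan is to reduce the statement to the classical Itô's formula by freezing the $\mathcal{F}_T$-measurable part of $Y_t$, and then invoking the Malliavin--Skorohod substitution formula; the extra term $\int_0^t \partial_{xy}^2 F(s,X_s,Y_s)\Theta_s\, ds$ will appear precisely as the ``trace'' correction produced by that substitution.

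The key observation is that $Y_t = Y_0 - Z_t$, where $Z_t := \int_0^t \phi_s^2\, ds$ is adapted and of finite variation and $Y_0 = \int_0^T \phi_s^2\, ds$ is $\mathcal{F}_T$-measurable. Freezing a parameter $y \in \mathbb{R}$, the pair $(X_s, Z_s)$ is a classical Itô semimartingale (with $[X, Z] = 0$, since $Z$ has no martingale part). I would apply the standard two-dimensional Itô formula to the deterministic $C^{1,2}$ map $(s,x,z) \mapsto F(s,x,y-z)$ evaluated at $(s,X_s,Z_s)$. This produces an identity involving $\partial_s F$, $\partial_x F \cdot \phi\, dW$, $-\tfrac12 \partial_x F \cdot \phi^2\, ds$, $-\partial_y F \cdot \phi^2\, ds$ and $\tfrac12 \partial_{xx}^2 F \cdot \phi^2\, ds$ all evaluated at $(s,X_s,y-Z_s)$; importantly, no $\partial_{xy}^2 F$ contribution arises at this stage, precisely because $Z$ has no martingale component.

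Next I would substitute $y = Y_0$. Each Lebesgue-type integral transforms into the corresponding term of (\ref{aito}) via the identity $Y_0 - Z_s = Y_s$, with no further work. The Itô integral, however, has parametrised adapted integrand $u_s(y) := \partial_x F(s,X_s, y - Z_s)\phi_s$, and the substitution $y = Y_0$ is non-trivial because $Y_0$ is $\mathcal{F}_T$-measurable. By the substitution formula in the Malliavin calculus (see Nualart \cite{Nualart}), the substituted Itô integral equals the Skorohod integral of $u_s(Y_0)$ plus a trace term $\int_0^t (\partial_y u_s)(Y_0)\cdot D_s Y_0\, ds$. Since $\phi$ is adapted, $D_s \phi_r^2 = 0$ for $r<s$, so $D_s Y_0 = \int_s^T D_s \phi_r^2\, dr$; combined with $\partial_y u_s(y) = \partial_{xy}^2 F(s,X_s,y-Z_s)\phi_s$ and $Y_0 - Z_s = Y_s$, this trace term becomes exactly $\int_0^t \partial_{xy}^2 F(s,X_s,Y_s)\Theta_s\, ds$, matching the proposition. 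The stochastic integral appearing in (\ref{aito}) must therefore be interpreted in the Skorohod sense (which is forced, since its integrand is anticipating through $Y_s$).

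The main obstacle I anticipate is the rigorous justification of the substitution step under the stated hypotheses: it requires $Y_0 \in \mathbb{D}_W^{1,2}$ with adequate $L^p$-control, sufficient joint smoothness of the family $y \mapsto u_s(y)$, and well-definedness of the Skorohod integral for the anticipating integrand $\partial_x F(s,X_s,Y_s)\phi_s$. I would address this by first approximating $\phi^2$ by simple adapted step processes (for which all objects are classical), applying the formula at that level, and then passing to the limit using the $L^2$-continuity of the Skorohod integral on $\mathbb{L}^{1,2}$, leveraging the boundedness of $F$ and its partial derivatives postulated in the statement to obtain uniform control.
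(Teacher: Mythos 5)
The paper itself offers no proof of this proposition: it is quoted from Al\`os (2006), where it is established by a direct argument (Taylor expansion of $F$ along a partition, with the duality between the Skorohod integral and the Malliavin derivative used to identify the cross-variation between the increments of $X$ and those of the anticipating process $Y$ — this is where $\int_0^t\partial_{xy}^2F\,\Theta_s\,ds$ originates). Your route is genuinely different and is sound in outline: writing $Y_t=Y_0-Z_t$ with $Z_t=\int_0^t\phi_r^2\,dr$ adapted and of finite variation, applying the classical two-dimensional It\^o formula to $(s,x,z)\mapsto F(s,x,y-z)$ along $(X_s,Z_s)$ for frozen $y$ (no $\partial_{xy}^2F$ term appears since $[X,Z]=0$), and then substituting $y=Y_0$ via the Malliavin substitution formula. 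The trace correction $\int_0^t\partial_yu_s(Y_0)\,D_sY_0\,ds$ does reduce to $\int_0^t\partial_{xy}^2F(s,X_s,Y_s)\Theta_s\,ds$, because adaptedness of $\phi$ gives $D_sY_0=\int_s^TD_s\phi_r^2\,dr$, so the signs and the form of $\Theta$ come out exactly right. What your approach buys is conceptual economy — the anticipating formula becomes a corollary of a theorem already in Nualart's book rather than a from-scratch limiting argument; what it costs is that the hypotheses of the substitution theorem (joint regularity of $y\mapsto u_s(y)$, $Y_0\in\mathbb{D}_W^{1,2}$ with adequate $L^p$ control, Skorohod integrability of the substituted integrand) are not literally contained in the proposition's stated assumptions, so your approximation-and-localization step is doing real work and is the part that would need to be written out carefully. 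Two points you handle correctly but that deserve emphasis: the stochastic integral in (\ref{aito}) must be interpreted in the Skorohod sense, since its integrand anticipates through $Y_s$ (the paper writes $dW_s$ without comment); and the diffusion coefficient in (\ref{model1}) must be read as $\sqrt{\phi_t^2}=\phi_t$, consistently with the $\sqrt{\phi_s^2}\,dW_s$ and $\tfrac12\partial_{xx}^2F\,\phi_s^2\,ds$ terms of the conclusion, which is what your frozen-$y$ It\^o step tacitly assumes.
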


\section{Exotic options and vanilla options}
This paper focuses on the study of exotic options with maturity $T$, defined by a payoff of European type
$$
(A-K)_+,
$$
where $A$ is a square-integrable random variable defined on a risk neutral probability space $\left( \Omega ,\mathcal{F},\PP\right) $. We assume that $A$ is $\mathcal{F}^W_T$-adapted,  where $\mathcal{F}^W$ denotes the sigma-algebra generated by a $d$-dimensional Brownian motion $W$. Assume, for the sake of simplicity, that the interest rate is zero. Then, the price of this option at a moment $0<t<T$ is given by
$$
V_t:=E (A-K)_+,
$$
where $E$ denotes the expectation with respect to the probability $\PP$.\\\\
This section is devoted to see that the above exotic option can be seen as a European call option on a forward stock, under a stochastic volatility model. Towards this end, we define the martingale $M_t^T:=E_t (A)$. It is clear that
$$
V_t=E (M_T^T-K)_+.
$$
Observe that, by the martingale representation theorem (see, for instance, Karatzas and Shreve \cite{KS97},
theorem 3.4.15), there exists a d-dimensional $\mathcal{F}_t^W$-adapted process $(m_1(T, \cdot), . . . ,m_d(T, \cdot))$
such that
\begin{equation}
\label{martingale}
M_t^T=E(M_T^T)+\sum_{i=1}^d\int_0^t m_i (T,s)dW_s^i.  
\end{equation}
\begin{example}[Vanilla options]
Consider a vanilla call option with $A=F_T$, where $F$ is a forward stock price given by a stochastic volatility model of the form
\begin{equation}\label{eq:stock}
d F_{t} = \phi_{t}F_t \left( \rho dW_{t}^{1}+\sqrt{1-\rho ^{2}}W_{t}^{2}\right),
\end{equation}
where $\phi $ is a positive and square-integrable process adapted
to the filtration generated by $W^{1}$. Then (\ref{martingale}) holds with $d=2$, $M_t^T=F_t$, $m_1 (T,t)=\rho\phi_t F_t$ and $m_2 (T,s)=\sqrt{1-\rho ^{2}}\phi_t F_t$.
\end{example}
\begin{example}[VIX options]
\label{VIX}
Consider as underlying the random variable $VIX_T$  given by
\begin{equation}\label{eq:VIX}
VIX_T=\sqrt{\frac{1}{\Delta}E_T\int_T^{T+\Delta} v_s ds},
\end{equation}
where $v $ is a positive process adapted
to the filtration generated by a Brownian motion $W$. Then, $d=1$ and, if $VIX_T\in \mathbb{D}_W^{1,2}$, the Clark-Ocone formula and Proposition 1.2.8 in \cite{Nualart} give us that
\begin{eqnarray}
\label{mgeneral}
m(T,t)&=&\frac{1}{2\Delta} E_t \left(\frac{1}{VIX_T}E_T\int_T^{T+\Delta} (D_t v_s)ds\right)\nonumber\\
&=&\frac{1}{2\Delta} E_t \left(\frac{1}{VIX_T}\int_T^{T+\Delta} (D_t v_s)ds\right).
\end{eqnarray}
\end{example}
\begin{example}[Realized variance options]
\label{VarOp}
Consider the case
\begin{equation}\label{eq:Var}
RV_T=\frac1T\int_0^{T} v_s ds,
\end{equation}
where $v$ is a positive and square-integrable process adapted process as in Example \ref{VIX}. Then, $d=1$ and the Clark-Ocone formula gives us that (\ref{martingale}) holds with
$$
m(T,t)=\frac1T \int_t^{T}  E_t (D_t v_s)ds.
$$
\end{example}
\begin{remark}
\label{remark15}
We remark that, in the case $d=1$,  (\ref{martingale}) can be written as
\begin{equation}
\label{martingale2}
M_t^T=E(M_T^T)+\int_0^t \frac{m (T,s)}{M_t^T}M_t^T dW_s.  
\end{equation}
That is, a European option on $A$ can be seen as a European call option on a forward stock given by a stochastic volatility volatility model of the form (\ref{eq:stock}) with $\rho=1$ and
\begin{equation}
\label{phi}
\phi_t=\frac{m(T,t)}{M_t^T}.
\end{equation}
In particular, when $A=VIX_T$, 
\begin{equation}
\label{volVIX}
\phi_t=\frac{1}{2\Delta M_t^T} E_t \left(\frac{1}{VIX_T}\int_T^{T+\Delta} (D_t v_s)ds\right)
\end{equation}
and when $A$ is the RV,
\begin{equation}
\label{volvar}
\phi_t=\frac1T\frac {\int_t^{T} E_t (D_t v_s)ds}{ M_t^T}.
\end{equation}
\end{remark}

\section{Implied volatility for exotic options}
Our purpose in this section is to develop the tools to study the short-time behavior of the at-the-money implied volatility (ATMI) for exotic options. Later, we will apply these results to VIX options and RV options, where we assume the underlying volatility process given by $\sqrt{v}$, where $v$ is a postive and square-integrable process adapted to the filtration generated by a Brownian motion $W$.  
\\\\
Let us define the implied volatility as the quantity $I_t^T(k)$ such that
$$
V_t=BS(t,\ln(E(A)), k, I_t^T(k)),
$$
where $BS(t,x,k,\sigma)$ denotes the classical Black-Scholes price for a European call with time to maturity $T-t$, log-stock price $x$, log-strike price $k$ and volatility $\sigma$. That is, 

\[
BS(t,x,k,\sigma )=e^{x}N(d_{+}(k,\sigma ))-e^{k}N(d_{-}(k,\sigma )), 
\]%
where $N$ denotes the cumulative probability function of the standard normal
law and

\[
d_{\pm }\left( k,\sigma \right) :=\frac{x-k}{\sigma \sqrt{T-t}}\pm 
\frac{\sigma }{2}\sqrt{T-t}.
\]%
For sake of simplicity we will denote $I_t^T:=I_t^T(\ln(E(A))$ the corresponding ATMI. Notice that $I_t^T=BS^{-1}(t,\ln(E_t A),\ln(E_tA),v)$. In the sequel we will write $BS^{-1}(t,v)=BS^{-1}(t,\ln(E_t A),\ln(E_tA),v)$.

\vspace{2mm}

\subsection{The ATMI short-time limit}

Now our first objective is to study the short-time behaviour of the implied volatility level of exotic options. We will need the following hypotheses. 

\begin{description}
\item[(H1)] $A\in \mathbb{L}^{1,p}$ for all $p>1$ 
\item[(H2)] $\frac{1}{M^T}\in L^p$, for all $p>1$.

\item[(H3)] The terms
$$
E\left( \int_{0}^{T}\frac{1}{u_s^2(T-s)}\Theta _{s}ds\right), \label{preciohipotesis}
$$
and
$$
\frac{1}{T^2 u_0^T}E\left( \int_{0}^{T}\left(\int_s^T D_s\phi_r^2 dr\right)^2ds\right), \label{preciohipotesis2}
$$
are well defined and tend to zero as $T\to 0$, where $u_t^T:=\sqrt{\frac{1}{T-t}\int_t^T\phi_s^2ds}$, with $\phi$ and $\Theta$  defined as in (\ref{phi}) and Theorem \ref{Ito}.

\item[(H4)] There exists $\gamma\in\left(-\frac12,0\right]$ such that the term
$$
\frac{1}{T^{\frac12+\gamma}}E\sqrt{\int_0^T\phi_s^2ds}
$$ 
has a finite limit as $T\to 0$.
\end{description}

\begin{remark}\label{rem:Dsphi}
Notice that, if $A\in \mathbb{L}^{1,p}$, the process $M_t^T$ is also in $\mathbb{L}^{1,p}$ and $D_sM_t^T=E_t (D_s A)$ (see Proposition 1.2.8 in Nualart (2005)). Moreover, Clark-Ocone formula gives us that $m(T,t)=E_t(D_t A)$, which allows us to see that $m(T,t)\in\mathbb{L}^{1,2}$ and $D_s m(T,t)=E_t(D_s D_t A)$ for all $s<t$. Then it follows that $\phi$ is also in $\mathbb{L}^{1,p}$ and
\begin{eqnarray*}
\label{der}
D_s\phi_t&=&\frac{D_s m(T,t)M_t^T-m(T,t)D_sM_t^T}{(M_t^T)^2}\nonumber\\
&=&\frac{E_t(D_s D_t A) E_t(A)-E_t(D_tA)E_t(D_sA)}{(E_t(A))^2}.
\end{eqnarray*}

\end{remark}
We will need the following result, that is an adaptation of Theorem 4.2 in \cite{AlosLeonVives}to the context of our problem.

\begin{theorem}  [Adaptation of Theorem 4.2 in Al\`os, Le\'on and Vives (\cite{AlosLeonVives})]
\label{the:liv} 

Assume that hypotheses (H1), (H2) and (H3)  hold. Then we have that the option price $V_t=E(A_T-E(A_t))$ is given by
\begin{equation}
V_{t}=E_{t}\left( BS\left( 0,\ln(E(A)),k,u_{t}\right) \right) +\frac{1 }{2}%
E\left( \int_{0}^{T}\frac{\partial G}{\partial x}%
(s,\ln(M_{s}^T),\ln(E(A)),u_{s})\Theta _{s}ds\right), \label{precio}
\end{equation}
where $G:=(\partial^2_{xx}-\partial_{x})BS$.
\end{theorem}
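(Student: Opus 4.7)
The proof is a Hull--White style decomposition, adapting the argument of Al\`os, Le\'on and Vives to the martingale $M_t^T=E_t(A)$ in place of a direct price process. The natural candidate is to apply the anticipating It\^o formula of Proposition~\ref{Ito} to
\begin{equation*}
\tilde F(s,x,y):=BS\bigl(s,\,x,\,k,\,\sqrt{y/(T-s)}\bigr),
\end{equation*}
evaluated along $X_s:=\ln M_s^T$ and $Y_s:=\int_s^T\phi_r^2\,dr$. By Remark~\ref{remark15}, representation \eqref{martingale2} and It\^o for the logarithm yield $dX_s=-\tfrac12\phi_s^2\,ds+\phi_s\,dW_s$ and $dY_s=-\phi_s^2\,ds$, which is exactly the setting of \eqref{model1}. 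The boundary values needed are $\tilde F(T,X_T,Y_T)=(A-K)_+$ (since $u_s\sqrt{T-s}\to 0$ as $s\uparrow T$) and $\tilde F(t,\ln M_t^T,Y_t)=BS(t,\ln M_t^T,k,u_t)$; at $t=0$ this reduces to the first term of the stated formula.

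\textbf{Drift cancellation.} The analytic engine is two identities for $\tilde F$. First, from the vega--gamma relation $\partial_\sigma BS=\sigma(T-s)\,G$ (with $G:=(\partial_{xx}^2-\partial_x)BS$), which follows from $\partial_x BS=e^x N(d_+)$ and the standard vega formula, the chain rule gives $\partial_y\tilde F=\tfrac12 G$. Second, combining the Black--Scholes PDE $\partial_s BS+\tfrac12\sigma^2 G=0$ with $\partial_s\sqrt{y/(T-s)}=\sigma/(2(T-s))$ produces $\partial_s\tilde F=0$: $\tilde F$ sees $s$ only through the total variance $y$. Substituting into the drift from Proposition~\ref{Ito},
\begin{equation*}
\partial_s\tilde F+\tfrac12\phi_s^2(\partial_{xx}^2\tilde F-\partial_x\tilde F)-\phi_s^2\,\partial_y\tilde F=0+\tfrac12\phi_s^2 G-\tfrac12\phi_s^2 G=0.
\end{equation*}
Hence only the stochastic integral $\int_t^T\partial_x\tilde F\,\phi_s\,dW_s$ and the anticipating correction $\int_t^T\partial_{xy}^2\tilde F\,\Theta_s\,ds=\tfrac12\int_t^T(\partial_x G)\Theta_s\,ds$ survive. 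Taking $E_t$ (with $t=0$) kills the martingale and yields the asserted identity.

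\textbf{Main obstacle: justifying the It\^o step.} Proposition~\ref{Ito} demands boundedness of $\tilde F$ and its partials, which fails both spatially (BS grows in $x$) and temporally (the vol $\sqrt{Y_s/(T-s)}$ can diverge as $s\uparrow T$ unless $Y_s\to 0$ fast enough, and $G,\partial_x G$ inherit the at-the-money singularity $(T-s)^{-1/2}$). The remedy is the usual one: replace $Y_s$ by $Y_s+\varepsilon$, localise $X_s$ by the first exit from a large ball, apply Proposition~\ref{Ito} to the smooth bounded surrogate, and then let $\varepsilon\downarrow 0$ and remove the localisation. Hypotheses (H1)--(H2), through Remark~\ref{rem:Dsphi}, guarantee $\phi\in\mathbb{L}^{1,p}$ for every $p$, so $\Theta_s$ is well defined and sufficiently integrable. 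The first quantity in (H3) then supplies the dominating majorant that controls the $\Theta_s$ integral in the limit, while the second quantity controls the quadratic variation of $\int\partial_x\tilde F\,\phi_s\,dW_s$ (whose singularity behaves like $1/(u_s\sqrt{T-s})$ at the money), guaranteeing both that it is a true martingale and that its contribution vanishes after the regularisation limit. This passage to the limit is the technically delicate part; the algebra of Step~2 is straightforward once the It\^o step is rigorously in place.
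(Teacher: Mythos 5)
Your proof is correct and follows essentially the same route as the paper: apply the anticipating It\^o formula of Proposition~\ref{Ito} to $BS\bigl(s,\ln M_s^T,k,\sqrt{Y_s/(T-s)}\bigr)$, cancel the drift via the Black--Scholes PDE together with the vega--gamma relation $\partial_\sigma BS=\sigma(T-s)G$ (the paper packages this as $\mathcal{L}_{BS}(u_s)BS=0$ plus $\partial_u BS=u_s\sqrt{T-s}\,G$), and identify the surviving $\partial^2_{xy}$ term with $\tfrac12\partial_x G\,\Theta_s$. Your discussion of the regularisation/localisation needed to legitimise the It\^o step is a welcome addition the paper leaves implicit, though note that $\partial_x BS=e^xN(d_+)$ is bounded by $e^x$, so the martingale term is tamer than the $1/(u_s\sqrt{T-s})$ singularity you attribute to it.
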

\begin{proof}
This proof is based on the same arguments as in the proof of Theorem 4.2 in \cite{AlosLeonVives}. Notice that
$$ V_t=E_{t}\left( BS\left( T,\ln(A),\ln(E(A)),u_{T}\right) \right).$$
Then, applying Theorem \ref{Ito} and taking expectations we obtain that
\begin{eqnarray}
\label{decomp1}
V_t&=&E_{t}\left( BS\left( T,\ln(A),\ln(E(A)),u_{T}\right) \right)\nonumber\\
&=&E_{t}\left( BS\left( t,\ln(E(A),\ln(E(A)),u_{0}\right) \right)\nonumber\\
&+&E_t \int_{t}^{T}\mathcal{L}_{BS} (u_s)BS(s,\ln(E_s(A)),\ln(A),u_s)ds\nonumber\\
&+&E_{t}\left( \int_{t}^{T}\frac{\partial^2 BS}{\partial x\partial u}%
(s,\ln(E_s(A)),\ln(E(A)),u_{s})\frac{1}{2u_s\sqrt{T-s}}\Theta _{s}ds\right),
\end{eqnarray}
where $\mathcal{L}_{BS}(u_s)$ denotes the Black-Scholes operator $\mathcal{L}_{BS}(u_s)=\frac12\left(\partial^2_{xx}-\partial_{x}\right)u_s^2+\partial_s$.
Now the result follows from the fact that $\mathcal{L}_{BS} (u_s)BS(s,X_s,k,u_s)=0$ and taking into account that $\frac{\partial BS}{\partial u}=u_s\sqrt{T-s} G$. Notice that by (H3) and the fact that
\[
\frac{\partial G}{\partial x}(s,x,k,\sigma)=\frac{e^{x}N^{\prime }(d_{+}\left(
k,\sigma\right) )}{\sigma\sqrt{T-s}}\left( 1-\frac{d_{+}\left(k,\sigma\right) 
}{\sigma\sqrt{T-s}}\right),
\] the integral term in (\ref{decomp1}) is well defined.
\end{proof}\\\\
The following result is a direct consequence of the arguments  in the proof of Theorems 3 and 8 in Al\`{o}s and Shiraya \cite{AlosShiraya}.

\begin{theorem}
\label{ATMIlimit}
Assume that hypotheses (H1), (H2) and (H3) hold. Then 
$$
\lim_{T\to 0} \left(I_0^T-E\sqrt{\frac1T\int_0^T\phi_s^2ds}\right)=0.
$$
\end{theorem}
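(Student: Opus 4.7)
The strategy is to combine the price decomposition of Theorem \ref{the:liv} with the small-time linearisation of the at-the-money Black--Scholes formula. Applying Theorem \ref{the:liv} at $t=0$ with $k=\ln E(A)$ yields
\[
V_0 = E\bigl[BS(0,\ln E(A),\ln E(A),u_0^T)\bigr] + R_T,
\]
where
\[
R_T := \tfrac{1}{2} E\int_0^T \tfrac{\partial G}{\partial x}\bigl(s,\ln M_s^T,\ln E(A),u_s\bigr)\,\Theta_s\,ds.
\]
On the other hand, by the definition of the implied volatility $V_0 = f(I_0^T)$, where $f(\sigma):=BS(0,\ln E(A),\ln E(A),\sigma)=E(A)\bigl[2N(\sigma\sqrt{T}/2)-1\bigr]$. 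Equating the two expressions gives the key identity $f(I_0^T) = E[f(u_0^T)] + R_T$.

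Next I would use the Taylor expansion of $f$ at $\sigma\sqrt{T}=0$, namely $f(\sigma) = (E(A)/\sqrt{2\pi})\sigma\sqrt{T} + h(\sigma)$ with $h(\sigma) = O(\sigma^3 T^{3/2})$. Substituting in the identity above and rearranging gives
\[
I_0^T - E[u_0^T] = \frac{\sqrt{2\pi}}{E(A)\sqrt{T}}\Bigl(R_T + E[h(u_0^T)] - h(I_0^T)\Bigr).
\]
The proof then reduces to showing that each of the three contributions on the right vanishes as $T\to 0$. The term $R_T/\sqrt{T}$ is controlled by the first bound in (H3): using the explicit expression $\partial_x G(s,x,k,\sigma) = \tfrac{e^x N'(d_+)}{\sigma\sqrt{T-s}}\bigl(\tfrac12 - \tfrac{x-k}{\sigma^2(T-s)}\bigr)$ and the boundedness of $N'$, the integrand of $R_T$ is dominated, up to an almost-at-the-money log-moneyness factor, by $M_s^T\,\Theta_s/[u_s^2(T-s)]$, whose expectation is exactly what (H3) forces to $0$. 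The cubic remainder $E[h(u_0^T)] - h(I_0^T)$ divided by $\sqrt{T}$ is of order $T\cdot E[(u_0^T)^3]$ and $T\cdot (I_0^T)^3$, both of which are $o(1)$ thanks to (H4) combined with the integrability provided by (H1) and (H2).

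The main obstacle is the clean $o(\sqrt{T})$ estimate for $R_T$: the kernel $\partial_x G$ is singular near maturity, and the log-moneyness $\ln M_s^T - \ln E(A)$ entering $d_+$ is random. The role of (H1)--(H2) is to guarantee that all Malliavin-type prefactors ($M_s^T$, $1/M_s^T$, and the derivatives of $\phi$ derived in Remark \ref{rem:Dsphi}) are sufficiently integrable for Hölder-type bounds to apply, while (H3) is tailored precisely so that the It\^o correction $R_T$ and the Jensen gap between $E[f(u_0^T)]$ and $f(E[u_0^T])$ together produce only $o(\sqrt{T})$. This is the same template as in the proofs of Theorems 3 and 8 of Al\`os and Shiraya \cite{AlosShiraya}, specialised here to the exotic-payoff setting where the volatility coefficient $\phi_t$ is the one identified in Remark \ref{remark15}.
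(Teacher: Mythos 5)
Your overall strategy---feeding the decomposition of Theorem \ref{the:liv} into the exact at-the-money identity $V_0=f(I_0^T)$ with $f(\sigma)=E(A)\bigl[2N(\sigma\sqrt{T}/2)-1\bigr]$ and then linearising $f$ in $\sigma\sqrt{T}$---is a legitimate and more elementary-looking alternative to the paper's argument, and your treatment of $R_T$ via the explicit form of $\partial_x G$ is at the same level of rigour as the paper's (both you and the paper in fact need an $o(\sqrt{T})$ estimate where the first condition of (H3), read literally, supplies only $o(1)$; that looseness is inherited, not introduced by you). The genuine gap is in the cubic remainder. You need $\bigl(E[h(u_0^T)]-h(I_0^T)\bigr)/\sqrt{T}\to 0$ and bound this by $T\,E[(u_0^T)^3]$ and $T\,(I_0^T)^3$, invoking (H4). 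But (H4) is not among the hypotheses of this theorem (it enters only in Corollary \ref{ATMIlimitcor}); it controls the first moment of $u_0^T$, not the third; and even granting $E[(u_0^T)^3]=O(T^{3\gamma})$, the resulting bound $T^{1+3\gamma}$ does not vanish when $\gamma\le -\tfrac13$, a range that (H4) explicitly permits and that occurs for realized-variance options on rough models with $H\le\tfrac16$ (where $\gamma=H-\tfrac12$). The bound on $h(I_0^T)$ is moreover circular as stated, since the order of $I_0^T$ is precisely what the theorem is meant to deliver.

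The paper avoids this issue by never Taylor-expanding the ATM price. It writes $I_0^T=BS^{-1}(0,\Gamma_T)$, uses the exact identity $E\bigl[BS^{-1}(0,BS(0,\cdot,\cdot,u_0^T))\bigr]=E[u_0^T]$ (no remainder at all), and isolates the Jensen gap $BS^{-1}(0,E[BS(\cdot,u_0^T)])-E[BS^{-1}(0,BS(\cdot,u_0^T))]$. That gap is computed by applying the classical It\^o formula to $r\mapsto BS^{-1}(0,\Lambda_r)$ along the martingale $\Lambda_r=E_r[BS(0,\cdot,\cdot,u_0^T)]$, yielding $-\tfrac12 E\int_0^T (BS^{-1})''(\Lambda_r)\,U_r^2\,dr$ with $U_r$ the Clark--Ocone integrand involving $\int_r^T D_r\phi_s^2\,ds$; the second condition in (H3) is tailored precisely to make this term vanish, and it says nothing about moments of $u_0^T$ itself, which is why your cubic remainder cannot be extracted from it. To repair your route you would either have to add a hypothesis of the form $T\,E[(u_0^T)^3]\to 0$ (strictly stronger than anything in (H1)--(H3)), or replace the Taylor expansion by the paper's $BS^{-1}$/Clark--Ocone device.
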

\begin{proof}
This proof is based on the same arguments as in Theorems 3 and 8 in \cite{AlosShiraya}. By definition, the ATMI is given by 
\begin{eqnarray}
I_0^T&=&BS^{-1}(0,\ln(E_t A),\ln(E(A)),V_0)\nonumber\\
&=&BS^{-1}(0,\ln(E_t A),\ln(E(A)),\Gamma_T),\nonumber
\end{eqnarray}
where
\begin{eqnarray*}
\Gamma_r&=&E\left( BS\left( 0,\ln(E(A)),k,u_{0}^T\right) \right) \nonumber\\
&+&\frac{1 }{2}%
E\left( \int_{0}^{r}\frac{\partial G}{\partial x}%
(s,\ln(M_{s}^T),\ln(E(A)),u_{s}^T)\Theta _{s}ds\right), \label{precio}
\end{eqnarray*}
Now, Theorem \ref{the:liv} allows us to write
\begin{eqnarray*}
\label{primerpaso}
&&I_0^T= E\left( BS^{-1}(0,\Gamma_0 \right))\nonumber\\
&+&E \int_t^T (BS^{-1})'(\ln(E(A)),\Gamma_s)\frac{\partial G}{\partial x}(s,\ln(E_sA),\ln(E(A)),u_{s}^T)\Theta _{s}ds,
\end{eqnarray*}
where $(BS^{-1})'$ denotes the first derivative of $BS^{-1}$ with respect to $\Gamma$. Now, as as%
\[
\left( BS^{-1}\right) ^{\prime }\left(k,\Gamma _{s}\right) =\frac{1%
}{e^{x}N^{\prime }(d_{+}\left( k,\Gamma _{s}\right)) \sqrt{T}},
\]%
and 
\[
\frac{\partial G}{\partial x}(s,x,k,\sigma)=\frac{e^{x}N^{\prime }(d_{+}\left(
k,\sigma\right) )}{\sigma\sqrt{T-s}}\left( 1-\frac{d_{+}\left(k,\sigma\right) 
}{\sigma\sqrt{T-s}}\right),
\] 
it is easy to see that (H3) implies that the second term in (\ref{primerpaso}) tends to zero. For the first term, we can write
\begin{eqnarray*}
\label{kenichiro}
&&\left( BS^{-1}(0,E( BS\left( 0,\ln(E(A)),\ln(E(A)),u_0^{T}) \right)\right)\nonumber\\
&=&E\left( BS^{-1}(0, BS\left( 0,\ln(E(A)),\ln(E(A)),u_0^{T}) \right)\right)\nonumber\\
&+&E\left( BS^{-1}(0,E( BS\left( 0,\ln(E(A)),\ln(E(A)),u_0^{T}) \right)\right)\nonumber\\
&-& BS^{-1}(0, BS\left( 0,\ln(E(A)),\ln(E(A)),u_0^{T}) \right)
\end{eqnarray*}
Notice that the first term in the right hand side of the above equation is equal to $u_0^T$. For the last two terms we can write
\begin{equation}
\label{mart}
BS\left( 0,\ln(E(A)),\ln(E(A)),u_0^{T}) \right) =BS\left( 0,\ln(E(A)),\ln(E(A)),u_0^{T}) \right)+\int_{t}^{T}U_{s}dW_{s},
\end{equation}
where, by the Clark-Ocone formula, 
\begin{eqnarray}
\label{U}
U_{s} &=&E_{s}\left[ D_{s}^{W}\left(BS\left( 0,\ln(E(A)),\ln(E(A)),u_0^{T}) \right)\right) \right]   \nonumber \\
&=&E_{s}\left[ AN^{\prime }(d_{+}\left(
\ln(E(A)), u_{0}^T\right))
\frac{\int_{s}^{T}D_{s}^{W}\phi _{s}^{2}ds}{2\sqrt{T}u_0^T}\right].
\end{eqnarray}
Define $\Lambda_r=E_r\left(BS\left( 0,\ln(E(A)),\ln(E(A)),u_0^{T}) \right)\right)$. Then, classical It\^o's formula and (\ref{mart}) give us that
\begin{eqnarray}&&E\left( BS^{-1}(0,E( BS\left( 0,\ln(E(A)),\ln(E(A)),u_0^{T}) \right)\right)\nonumber\\
&-& BS^{-1}(0, BS\left( 0,\ln(E(A)),\ln(E(A)),u_0^{T}) \right)\nonumber\\
&=& {E_{t}\left[ BS^{-1}(\ln(E(A)),\Lambda_0 )-BS^{-1}(\ln(E(A)),\Lambda_T )\right] }  \nonumber
\\
&=&-\frac12E_t\Bigg[\int_{t}^{T}\left( BS^{-1}\right) ^{\prime \prime}\left(\ln(E(A)), \Lambda_r\right) U_{r}^{2}dr\Bigg]. \label{zerocorr} 
\end{eqnarray}
This, jointly with (\ref{U}), (H3) and the fact that 
$$
\left( BS^{-1}\right) ^{\prime \prime }\left(X_0,{\Lambda _{r}}%
\right) =
\frac{BS^{-1}(X_0, \Lambda_r)}{4\left( \exp
(X_{t})N^{\prime }(d_{+}\left( X_0,BS^{-1}(X_0, \Lambda_r)\right)
)\right) ^{2}},
$$
allows us to prove that the last term in (\ref{zerocorr}) tends to zero. Now the proof is complete
\end{proof}
\\\\
This result gives us, if (H4) holds, the following corollary.

\begin{corollary}
\label{ATMIlimitcor}
Assume a random variable $A$ such that hypotheses (H1), (H2), (H3) and (H4) hold. Then 
$$
\lim_{T\to 0} T^{-\gamma} I_t^T= \lim_{T\to 0}\frac{1}{T^{\frac12+\gamma}}E\sqrt{\int_0^T\phi_s^2ds}.
$$
\end{corollary}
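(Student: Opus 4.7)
The plan is to derive the corollary as a direct consequence of Theorem \ref{ATMIlimit}, by multiplying the first-order asymptotic for $I_0^T$ by $T^{-\gamma}$ and showing the induced error term still vanishes under the range constraint $\gamma\in(-\tfrac12,0]$ combined with hypothesis (H4).

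First I would rewrite the normalisation. Since
\[
E\sqrt{\tfrac{1}{T}\int_0^T \phi_s^2\,ds} \;=\; \frac{1}{\sqrt{T}}\, E\sqrt{\int_0^T \phi_s^2\,ds},
\]
multiplying by $T^{-\gamma}$ produces exactly the quantity appearing on the right-hand side of the corollary:
\[
T^{-\gamma}\, E\sqrt{\tfrac{1}{T}\int_0^T \phi_s^2\,ds}
\;=\; \frac{1}{T^{\frac12+\gamma}}\, E\sqrt{\int_0^T \phi_s^2\,ds}.
\]
By hypothesis (H4), this has a finite limit as $T\to 0$.

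Next, I would use Theorem \ref{ATMIlimit} to split
\[
T^{-\gamma} I_0^T
\;=\; \frac{1}{T^{\frac12+\gamma}}\, E\sqrt{\int_0^T \phi_s^2\,ds}
\;+\; T^{-\gamma}\left( I_0^T - E\sqrt{\tfrac{1}{T}\int_0^T \phi_s^2\,ds}\right).
\]
The bracketed difference is $o(1)$ as $T\to 0$ by Theorem \ref{ATMIlimit}. The only thing to check is that the prefactor $T^{-\gamma}$ does not blow this up: because $\gamma\in(-\tfrac12,0]$ we have $-\gamma\in[0,\tfrac12)$, hence $T^{-\gamma}\leq 1$ for $T\in(0,1]$ and in particular $T^{-\gamma}$ stays bounded as $T\to 0^+$. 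Consequently $T^{-\gamma}\cdot o(1)=o(1)$, and taking limits yields the claimed identity.

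There is no real obstacle here; the corollary is essentially a rescaling of Theorem \ref{ATMIlimit}. The only place where care is needed is the boundedness of $T^{-\gamma}$, which is exactly why the hypothesis restricts $\gamma$ to $(-\tfrac12,0]$ rather than allowing $\gamma>0$, where the error term $T^{-\gamma}\cdot o(1)$ could in principle survive and one would need a quantitative rate in Theorem \ref{ATMIlimit} to conclude. Since our $\gamma$ lies in the safe range, no such refinement is needed.
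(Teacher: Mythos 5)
Your proposal is correct and follows essentially the same route as the paper: the paper simply presents the corollary as a direct consequence of Theorem \ref{ATMIlimit} under (H4), and your rescaling argument, together with the observation that $T^{-\gamma}$ stays bounded as $T\to 0^+$ because $\gamma\in\left(-\tfrac12,0\right]$, is precisely the justification that is left implicit there. Your closing remark about why $\gamma>0$ would require a quantitative rate in Theorem \ref{ATMIlimit} is a sensible additional observation, not a deviation.
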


\subsection{The ATMI short-time skew}

Our main goal in this section is to study the ATM short-time limit of the implied volatility skew. In particular, we will characterize the 
class of stochastic volatilty process that reproduce a positive VIX skew, as observed in real market data.
Towards this end we will need the following hypotheses

\begin{description}

\item[(H1')] $A\in \mathbb{L}^{2,p}$ for all $p>1$ 
\item[(H5)] The terms
\begin{eqnarray*} \frac{1}{\sqrt{T}}\sum_{k=4}^6E\left(
\int_{0}^{T} (u_s^T (T-s))^{-\frac{k}{2}}\left(  \int_s^T\Theta_rdr\right)\Theta_{s}ds\right)
\end{eqnarray*}
and 
\begin{eqnarray*}\frac{1}{\sqrt{T}} \sum_{k=3}^4E\left(
\int_{0}^{T} \frac{1}{\sqrt{T}} (u_s^T (T-s))^{-\frac{k}{2}}\left(  \int_s^TD_s \Theta_rdr\right)\phi_{s}ds\right)
\end{eqnarray*}
tend to zero as $T\to 0$.

\item[(H6)]There exists $\lambda\in\left(-\frac12,0\right)$ such that the  term
$$
\frac{E \left[ \int_{0}^{T}\left( \phi_s^T D_{s}^W\phi_u^2 du\right) ds\right] }{(u _0^T)^{3}T^{2+\lambda
}}
$$
has a finite limit as $T\to 0$.

\end{description}

\begin{theorem}
\label{theadaptation} Consider a random variable $A$ such that
Hypotheses (H1'), (H2), (H3), (H5) and (H6) hold. Then%
\begin{equation}
\lim_{T\rightarrow t}T^{-\lambda}\frac{\partial I_0^T}{\partial k}%
(\ln(E(A))=\frac{1}{2}\lim_{T\rightarrow t}\frac{%
E_{t}\left[ \int_{0}^{T}\left( \phi_s \int_s^T D_{s} \phi_u^2 du\right) ds\right] }{u _0^{3}T^{2+\lambda
}}  \label{limitfirstderivative}
\end{equation}
\end{theorem}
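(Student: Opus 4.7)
The strategy is to adapt the short-maturity skew proof of Al\`os, Le\'on and Vives~\cite{AlosLeonVives} to the exotic-option decomposition of Theorem~\ref{the:liv}. First, implicit differentiation of the defining identity $V_0(k) = BS(0,\ln(E(A)),k,I_0^T(k))$ in $k$ yields
\[
\frac{\partial I_0^T}{\partial k}(k) = \frac{\partial_k V_0(k) - \partial_k BS(0,\ln(E(A)),k,I_0^T(k))}{\partial_\sigma BS(0,\ln(E(A)),k,I_0^T(k))}.
\]
At $k=\ln(E(A))$ the denominator equals $E(A)\sqrt{T}\,N'(I_0^T\sqrt{T}/2)$, which is of order $\sqrt{T}$ by Theorem~\ref{ATMIlimit}. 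Therefore $T^{-\lambda}\partial_k I_0^T$ is $T^{-\lambda-1/2}$ times the numerator, and the task reduces to computing the limit of the rescaled numerator.

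Second, applying Theorem~\ref{the:liv} to $V_0(k)$ at a general strike $k$ gives
\[
V_0(k) = E\left[BS(0,\ln(E(A)),k,u_0^T)\right] + \frac{1}{2} E\left[\int_0^T \frac{\partial G}{\partial x}(s,\ln M_s^T,k,u_s^T)\Theta_s\, ds\right].
\]
Differentiating in $k$ and subtracting $\partial_k BS(0,\ln(E(A)),\ln(E(A)),I_0^T)$ splits the numerator into (i) a stochastic-volatility piece $E[\partial_k BS(\cdot,u_0^T)] - \partial_k BS(\cdot,I_0^T)$ capturing the randomness of $u_0^T$, and (ii) an anticipating piece $\frac12 E[\int_0^T \partial_k\partial_x G(s,\ln M_s^T,\ln(E(A)),u_s^T)\Theta_s\, ds]$. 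Hypothesis (H5) is tailored precisely to the singular Gaussian factors $(u_s^T(T-s))^{-k/2}$ produced when $\partial_k\partial_x G$ and its Malliavin derivative are expanded, so piece (ii) is $o(T^{1/2+\lambda})$ and contributes nothing to the rescaled limit.

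For piece (i), I would repeat the Clark-Ocone argument from the proof of Theorem~\ref{ATMIlimit}: write $BS(0,\ln(E(A)),\ln(E(A)),u_0^T) = E[\cdots] + \int_0^T U_s\, dW_s$ with $U_s$ as in~\eqref{U}, proportional to $\int_s^T D_s\phi_u^2\, du$, and apply classical It\^o's formula to $\partial_k BS(\cdot,u_0^T)$ (instead of to $BS^{-1}$ as in the level proof). After cancelling the common density factors $N'(d_+)$ that appear in $\partial_k BS$, in the Vega $\partial_\sigma BS$, and in $U_s$, and using that $I_0^T$ is close to $E[u_0^T]$ by Theorem~\ref{ATMIlimit}, the leading contribution collapses to
\[
\frac{1}{2}\cdot\frac{E\left[\int_0^T \phi_s \int_s^T D_s \phi_u^2\, du\, ds\right]}{(u_0^T)^{3} T^{2}},
\]
whose $T^{-\lambda}$-rescaled limit is finite by (H6) and matches the right-hand side of~\eqref{limitfirstderivative}.

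The main obstacle is the bookkeeping of sub-leading terms produced both by the $k$-differentiation of the decomposition and by the Taylor mismatch between the random $u_0^T$ and the deterministic $I_0^T$: each such error must be shown to be $o(T^{1/2+\lambda})$ before rescaling. Hypothesis~(H5) controls the anticipating-piece errors, (H3) ensures that the baseline estimates from Theorem~\ref{ATMIlimit} propagate, and the $L^p$-integrability from (H1') and (H2), combined with the Malliavin chain rule of Remark~\ref{rem:Dsphi}, provides the uniform domination needed to exchange limit and expectation, following the template of Al\`os and Shiraya~\cite{AlosShiraya}.
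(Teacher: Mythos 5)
Your overall skeleton --- implicit differentiation of $V_0=BS(0,\ln(E(A)),k,I_0^T(k))$, differentiation of the decomposition of Theorem \ref{the:liv} in $k$, a split of the numerator into a ``difference of $\partial_k BS$ at $u_0^T$ versus $I_0^T$'' piece and an anticipating-correction piece, with (H5) controlling remainders and (H6) providing the limit --- is exactly the paper's strategy. However, you have assigned the leading-order contribution to the wrong piece, and this is a genuine error rather than a presentational one. At the money, $\partial_x G(s,x,x,\sigma)\sim \frac{e^xN'(d_+)}{2\sigma\sqrt{T-s}}$ while $\partial_k\partial_x G(s,x,x,\sigma)\sim \frac{e^xN'(d_+)}{\sigma^3(T-s)^{3/2}}$, one full power of $\sqrt{T-s}$ more singular. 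After dividing by the vega $\partial_\sigma BS\sim E(A)\sqrt{T}$, it is precisely your piece (ii) --- the $k$-derivative of the anticipating correction term --- that generates the factor $u_0^{-3}T^{-2}$ multiplying $E\int_0^T\phi_s\int_s^T D_s\phi_u^2\,du\,ds$ on the right-hand side of (\ref{limitfirstderivative}); in the paper this is the term $T_1=E\left(L(0,\cdot,u_0^T)\int_0^T\Theta_s\,ds\right)$ with $L=(\tfrac12+\partial_k)F$, whose dominant part $\frac{1}{(u_0^T)^2T}$ in the explicit formula for $L(0,\cdot,u_0^T)$ comes entirely from $\partial_k\partial_xG$. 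Hypothesis (H5) only kills the remainders $T_2,T_3$ produced when one freezes $s=0$ in the integrand via a second application of the anticipating It\^o formula; it does not make the whole anticipating piece negligible.

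Conversely, your piece (i) satisfies $E\left(\partial_kBS(\cdot,u_0^T)\right)-\partial_kBS(\cdot,I_0^T)=\frac12E\left(\int_0^TF(s,\cdot,u_s^T)\Theta_s\,ds\right)$ with $F=\frac12\partial_xG$ (this is the ``after some algebra'' identity, obtained by applying the decomposition formula to the function $\partial_kBS$ itself, not by Clark--Ocone); after division by the vega it contributes only the subleading constant inside the bracket of $L(0,\cdot,u_0^T)$ and vanishes in the rescaled limit, being smaller by a factor of order $u_0^2T$. Moreover, your proposed Clark--Ocone plus classical It\^o treatment of piece (i) cannot ``collapse'' to $\frac12E\left[\int_0^T\phi_s\int_s^TD_s\phi_u^2\,du\,ds\right]/(u_0^3T^2)$: the It\^o correction it produces is quadratic in $U_r\propto\int_r^TD_r\phi_u^2\,du$ (a vol-of-vol convexity term of the type controlled in the proof of Theorem \ref{ATMIlimit}), whereas the target expression is linear in $D_s\phi_u^2$ and weighted by $\phi_s$; these are different objects with different asymptotic orders. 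Followed as written, your plan discards the term that actually carries the limit and attempts to extract it from one that is asymptotically negligible.
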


\begin{proof}
This proof follows the same steps as those of Proposition 5.1 and
Proposition 6.2 in \cite{AlosLeonVives} and Theorem 4.5 in \cite{AL}. Taking partial derivatives with respect to $k$ on the expression $%
V_{0}=BS\left( 0,\ln(E(A)),k,I_0^T(k)\right) $ we obtain

\begin{eqnarray}
\frac{\partial V_{0}}{\partial k}&=&\frac{\partial BS}{\partial k}%
(0,\ln(E(A)),k,I_0^T(k)))\nonumber\\
&&+\frac{\partial BS}{\partial \sigma }(0,\ln(E(A)),k,I_0^T(k)))%
\frac{\partial I_0^T}{\partial k}(k).  \label{first}
\end{eqnarray}%
On the other hand, from (\ref{precio}) we deduce that 
\begin{equation}
\frac{\partial V_{0}}{\partial k}=E_{t}\left( \frac{\partial BS}{%
\partial k}(0,\ln(E(A)),k,u_0^T)\right) +E\left( \int_{0}^{T}\frac{\partial F%
}{\partial k}(s,\ln(E_s(A))),k,u_{s}^T)\Theta_{s}ds\right),  \label{second2}
\end{equation}%
 where 
$$
F(s,x,k,\sigma):=\frac{1}{2}\frac{\partial G}{\partial x}(s,x,k,\sigma).
$$
After some algebra (see \cite{AlosLeonVives}) it is easy to see that
\begin{eqnarray*}
&&E\left( \frac{\partial BS}{\partial k}(0,\ln(E(A)),\ln(E(A)),u_0^T)-%
\frac{\partial BS}{\partial k}(t,\ln(E(A)),\ln(E(A)),I_0^T))\right)  \\
&=&\frac{1 }{2}E\left( \int_{0}^{T} F(s,\ln(E_s(A))),k,u_{s}^T)\Theta_{s}ds\right) ,
\end{eqnarray*}%
This, jointly with (\ref{first}) and (\ref{second2}) implies that
\begin{equation}
\label{laquefaltaba}  
\frac{\partial I_0^T}{\partial k}(\ln(E(A))  
 =\frac{E\left(
\int_{0}^{T}L(s,\ln(E_s(A))),k,u_{s}^T)\Theta_{s}ds\right)  }{\frac{\partial BS}{\partial \sigma }\left(
0,\ln(E(A)), \ln(E(A)),I_0^T\right) },
\end{equation}%
where $L:=(\frac12+\frac{\partial}{\partial k})F$. Now
\begin{eqnarray}
\label{T}
&&E\left(
\int_{0}^{T}L(s,\ln(E_s(A))),k,u_{s}^T)\Theta_{s}ds\right)  \nonumber \\
&=&E \left(L(0,\ln(E(A)),k,u_0^T)\int_{t}^{T}\Theta _{s}ds\right)\nonumber \\
&+&\frac12 E\left(
\int_{0}^{T}\left(\frac{\partial^3}{\partial x^3}-\frac{\partial^2}{\partial x^2}\right)L(s,\ln(E_s(A))),k,u_{s}^T)\left(  \int_s^T\Theta_rdr\right)\Theta_{s}ds\right)\nonumber\\
&+& E\left(
\int_{0}^{T}\frac{\partial L}{\partial k}(s,\ln(E_s(A))),k,u_{s}^T)\left(  \int_s^TD_s\Theta_rdr\right)\phi_s ds\right)\nonumber\\
&=:&T_1+T_2+T_3.
\end{eqnarray}
Now, a simple calculation gives us that 
\begin{equation}
\label{T2}
\left(\frac{\partial^3}{\partial x^3}-\frac{\partial^2}{\partial x^2}\right)L(s,\ln(E_s(A))),k,u_{s}^T)\leq
\sum_{k=4}^6 (u_s^T (T-s))^{-\frac{k}{2}},
\end{equation}
\begin{equation}
\label{T3}
\frac{\partial L}{\partial k}(s,\ln(E_s(A))),k,u_{s}^T)\leq
\sum_{k=3}^4 (u_s^T (T-s))^{-\frac{k}{2}},
\end{equation}
and
\begin{equation}
\label{T1}
L(0,\ln(E(A)),k,u_0^T)= \frac{1}{2}\frac{A\exp \left( -\frac{(u_{0}^T)^{2}T}{8}%
\right) }{\sqrt{2\pi }u_{0}^T\sqrt{T}}\left[ \frac{1}{(u_{0}^T)^{2}T}-\frac{1%
}{2}\right] .
\end{equation}%
Then, (\ref{T2}), (\ref{T3}), (\ref{T1}) together with (\ref{laquefaltaba}), (\ref{T}) and the fact that 
$$\frac{\partial BS}{\partial \sigma }\left(0,x,x,\sigma\right)= \frac{A\exp \left( -\frac{\sigma^{2}T}{8}%
\right)\sqrt{T} }{\sqrt{2\pi }}
$$
allow us to complete the proof.
\end{proof}

\begin{remark}
\label{remark11}Notice that $D_s\phi_t^2=2\phi_tD_s\phi_t$. Then the above theorem gives us that the sign of the short-time skew is positive if $$\lim_{s,t\to 0}E\left(f(s,t)D_s\phi_t- C\right)^p=0$$  
for some a positive function $f(s,t)$ such that $\lim_{s,t\to 0}f(s,t)\geq 0$ and for some constants $C>0$ and $p>1$. By (\ref{phi}), this condition holds if
\begin{eqnarray}
\label{criteria}
\lim_{s,t\to 0}E\left[f(s,t)(D_s m(T,t)M_t^T-m(T,t)D_sM_t^T)- c\right]^q=0,
\end{eqnarray}
for some $c>0$ and $q>1$.
Notice that (\ref{criteria}) gives us a tool to identify those stochastic volatility models that can reproduce a short-time positive skew, as we will see in the next section.
\end{remark}
\section{A General family of models}
Building on the Truncated Brownian semistationary processes ($\mathcal{TBSS}$) introduced by Barndorff-Nielsen and Schmiegel \cite{turbulence} we consider the following family of models: 
\begin{model}[Mixed Generalized rough volatility models]\label{MixedTBSS}
Let us define the following instantaneous variance dynamics:
$$v_t=v_0 \left(\gamma\mathcal{E}\left(\nu\sqrt{2H}\mathcal{B}_t\right)+(1-\gamma)\mathcal{E}\left(\eta\sqrt{2H}\mathcal{B}_t\right)\right),\quad \nu\geq 0,\;\eta\geq 0$$
where $$\mathcal{B}_t=\int_0^t\exp\left(-\beta(t-s)\right) (t-s)^{H-1/2} dW_s,\quad \beta \geq 0$$ and $\mathcal{E}(\cdot)$ denotes the Wick stochastic exponential.  It is then easy to see that $$D_sv_u=\left(\gamma \nu\mathcal{E}(\nu\sqrt{2H}\mathcal{B}_u)+(1-\gamma)\eta\mathcal{E}(\eta\sqrt{2H}\mathcal{B}_u\right)\sqrt{2H} (t-s)^{H-1/2} \exp(-\beta(u-s)),$$ for all $s<u$. Then  hypotheses (H1)-(H6) hold for the underlyings $VIX_T$ and $RV$, with $\lambda=0$ and $\lambda=H-\frac12$, respectively.
\end{model}
This class of models covers most of the lognormal models considered in the literature (SABR \cite{SABR}, (rough) Bergomi \cite{Bergomi,BFG15}, etc.). Moreover, we consider a mixing weight solution as proposed by Bergomi \cite{Bergomi3} to overcome flat VIX smiles. We also emphasize that \cite{HJM17,JMM17} provide theoretical justification to the Monte Carlo simulation of these processes.
\begin{remark}
The term mixing makes reference to the fact that we have a sum of two lognormal random variables with different weights. The reader should note that when we do not explicitly use the term mixed, this corresponds to $\gamma=\eta=0$.
\end{remark}
\section{VIX options}

This section is devoted to apply the results in Section 4 to the study of the ATMI level and skew of VIX options. First of all, let us illustrate the typical behaviour of these quantities in the market. Figure \ref{fig:VIXATMIHistorical} shows the historical behaviour of the ATMI term structure (source: OptionMetrics). A very sharp decrease of the ATMI level between the first two available maturities is usually observed. Also, the cloud of points in Figure \ref{fig:VIXATMIHistorical} is rather wide, meaning that volatility of volatility itself changes with time. In Figure \ref{fig:VIXSmiles} we observe that the skew is clearly positive, but also decreasing as a function of time to maturity, since smiles tend to flatten.
\begin{figure}[H]
\includegraphics[scale=0.8]{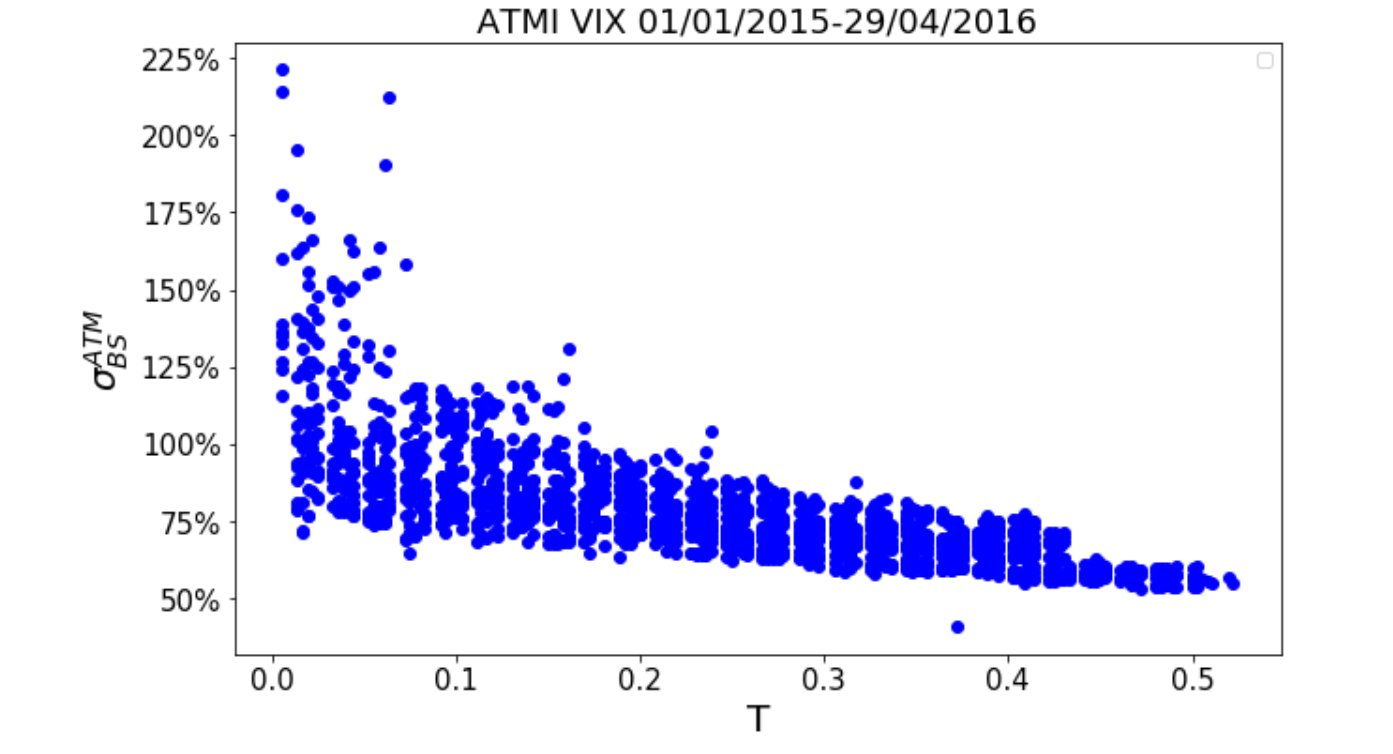}
\caption{VIX ATMI daily behaviour 2015-2016. Source: OptionMetrics.}
\label{fig:VIXATMIHistorical}
\end{figure}
\begin{figure}[H]
\includegraphics[scale=0.8]{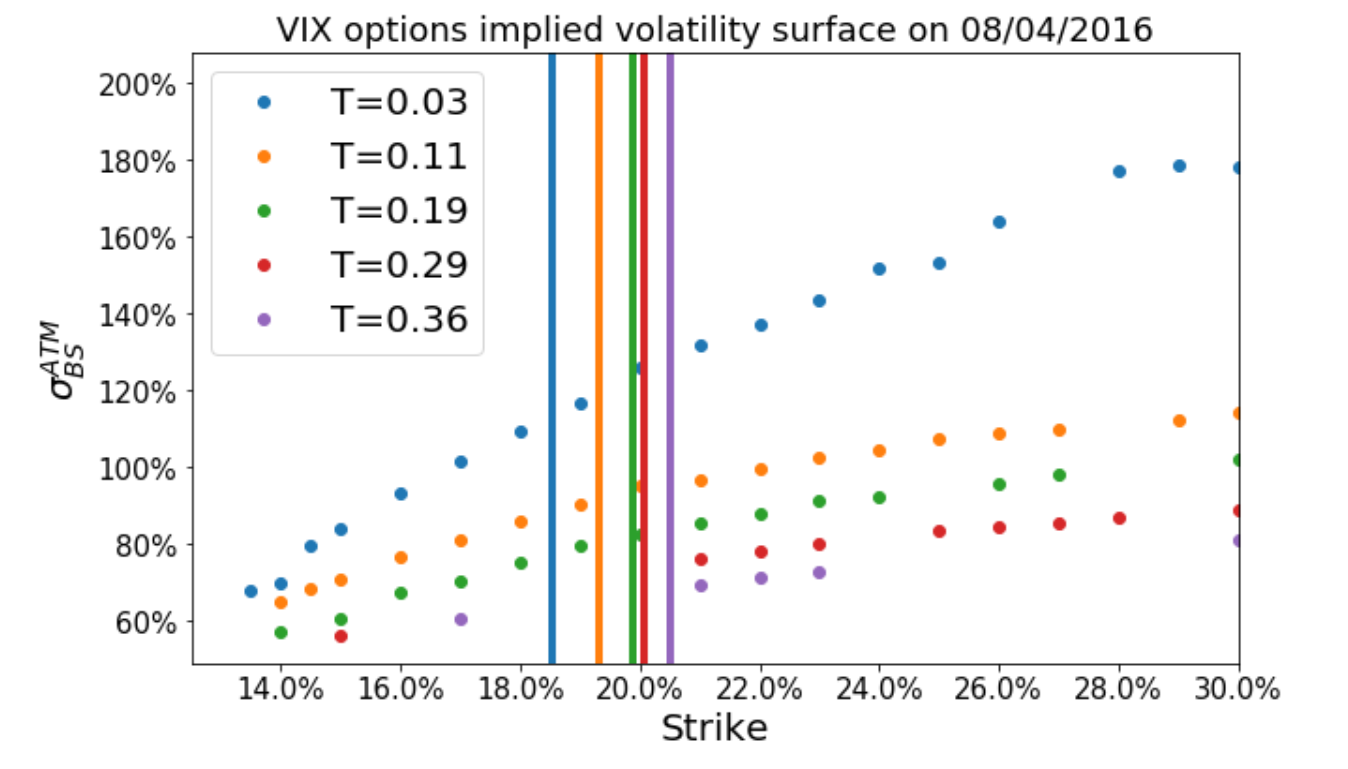}
\caption{VIX options implied volatility surface on 08/04/2016. Solid lines represent the ATMI level corresponding to each maturity. Source: OptionMetrics.}
\label{fig:VIXSmiles}
\end{figure}
\subsection{ATMI of VIX options}
\begin{proposition}\label{PROP: ATMIVIX}
Consider the process $VIX_T$ given by (\ref{eq:VIX}), with $v=f(Y)$,   where $f$ is a  function in $\mathcal{C}^2$ such that $f, f',f''\in L^p$, for all $p>1$, and $Y=\int_0^t (t-s)^{H-\frac12}g(t-s) dW_s$, where $H<\frac12$ and $g\in\mathcal{C}^1_b$ . Then we have
\begin{eqnarray}
\lim_{T\to 0} I_t^T(\ln E(VIX_T))&=\displaystyle \frac{f'(Y_0)}{2\Delta(VIX_0)^2}\phi(\Delta)\label{eq: generalformATMIVIX}.
\end{eqnarray}
\end{proposition}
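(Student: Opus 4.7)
The strategy is to apply Corollary~\ref{ATMIlimitcor} to $A=VIX_T$ with parameter $\gamma=0$, which reduces the statement to computing
\[
\lim_{T\to 0}\frac{1}{\sqrt T}\,E\sqrt{\int_0^T\phi_s^2\,ds},
\]
with $\phi_s$ given by (\ref{volVIX}). The plan is to show that $\phi_s$ has a nonzero deterministic $L^p$-limit $\phi_0$ as $s,T\to 0$; this simultaneously verifies (H4) with $\gamma=0$ (the integrand being of order $1$, the square root of order $\sqrt T$, cancelling $1/\sqrt T$) and identifies the right-hand side of the claim.

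First I would verify (H1)--(H3). The Malliavin chain rule applied to $v_u=f(Y_u)$ yields
\[
D_r v_u=f'(Y_u)(u-r)^{H-\frac12}g(u-r)\mathbf 1_{\{r<u\}},
\]
and Proposition 1.2.8 of \cite{Nualart}, combined with $f,f',f''\in L^p$ for all $p>1$, delivers $VIX_T\in\mathbb L^{1,p}$, giving (H1). Positivity of $f$ plus Gaussian tail bounds on $Y$ provide negative moment bounds on $VIX_T$, yielding (H2). For (H3), direct substitution of $D_rv_u$ into the expressions for $\phi_s$ and $D_s\phi_u^2=2\phi_uD_s\phi_u$ and Cauchy--Schwarz produce $L^p$-bounds uniform in $T\to 0$; the extra $1/T$ and $1/T^2$ factors in the two terms of (H3) are then more than compensated by the short time interval $[0,T]$.

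The key computation is the limit of $\phi_s$. Substituting the Malliavin derivative into (\ref{volVIX}) gives
\[
\phi_s=\frac{1}{2\Delta\,M_s^T}\,E_s\!\left[\frac{1}{VIX_T}\int_T^{T+\Delta}f'(Y_u)(u-s)^{H-\frac12}g(u-s)\,du\right].
\]
As $s,T\to 0$, continuity of $Y$ gives $M_s^T\to VIX_0$ and $VIX_T\to VIX_0$ in $L^p$, triviality of $\mathcal F_0$ collapses $E_s$ into the unconditional expectation, and $(u-s)^{H-\frac12}g(u-s)\to u^{H-\frac12}g(u)$; integrability of $u^{H-\frac12}$ at $0$ (since $H>-\frac12$) justifies passing to the limit, yielding
\[
\phi_0=\frac{1}{2\Delta(VIX_0)^2}\,E\!\left[\int_0^\Delta f'(Y_u)u^{H-\frac12}g(u)\,du\right]=\frac{f'(Y_0)}{2\Delta(VIX_0)^2}\,\phi(\Delta),
\]
the second equality being the definition of $\phi(\Delta)$ (a deterministic function of $\Delta$ absorbing the averaging of $f'(Y_u)$; note $Y_0=0$). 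A final dominated-convergence argument then gives $\frac{1}{\sqrt T}E\sqrt{\int_0^T\phi_s^2ds}\to\phi_0$, completing the proof.

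The main technical obstacle is the $L^p$-convergence $\phi_s\to\phi_0$. The random denominators $M_s^T$ and $VIX_T$ inside the conditional expectation, coupled with the (mildly singular) kernel $(u-s)^{H-\frac12}$, require a careful interchange of limits, conditional expectation and integral; the hypotheses $g\in\mathcal C^1_b$, $H<\frac12$, and the $L^p$-integrability of $f,f',f''$ are precisely what is needed to obtain uniform bounds and apply dominated convergence. Once those uniform estimates are in hand, everything else is bookkeeping.
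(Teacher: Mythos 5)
Your proposal follows essentially the same route as the paper's proof: verify (H1)--(H4) with $\gamma=0$, invoke Corollary \ref{ATMIlimitcor}, substitute the explicit expression (\ref{volVIX}) for $\phi_s$ via the Malliavin chain rule $D_sv_u=f'(Y_u)(u-s)^{H-\frac12}g(u-s)$, and pass to the limit using $M_s^T,\,VIX_T\to VIX_0$. Your parenthetical observation that $\phi(\Delta)$ must absorb the averaging of $f'(Y_u)$ over the non-shrinking window $u\in[0,\Delta]$ is, if anything, slightly more careful than the paper, which simply replaces $E[f'(Y_u)]$ by $f'(Y_0)$ and defines $\phi(\Delta)=\int_0^\Delta u^{H-\frac12}g(u)\,du$.
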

\begin{proof}
 Then (H1)-(H4) hold with $\lambda=0$ and Corollary \ref{ATMIlimitcor} gives us that
 \begin{eqnarray*}
&&\lim_{T\to 0} I_t^T(\ln E(VIX_T))= \lim_{T\to 0}\frac{1}{T^{\frac12}}E\sqrt{\int_0^T\sigma_s^2ds}\nonumber\\
&&=\lim_{T\to 0}\frac{1}{T^{\frac12}}E\sqrt{\int_0^T\left(\frac{1}{2\Delta M_s^TVIX_s} \int_T^{T+\Delta}E_s (D_s(v_u))du\right)^2ds}\nonumber\\
&&=\frac{f'(Y_0)}{2\Delta M_0^0VIX_0}\left( \int_0^{\Delta}u^{H-\frac12}g(u)du\right)\nonumber\\
&&=\frac{f'(Y_0)}{2\Delta (VIX_0)^2}\left( \int_0^{\Delta}u^{H-\frac12}g(u)du\right)\nonumber\\
&&=:\frac{f'(Y_0)}{2\Delta(VIX_0)^2}\phi(\Delta).
\end{eqnarray*}
\end{proof}

\begin{example}[Mixed Generalized rough volatility models]
 We consider Model \ref{MixedTBSS}, then applying Proposition \ref{PROP: ATMIVIX} we get
\begin{eqnarray}
\label{eq:ATMIVIX}
&&\lim_{T\to 0}I_t^T(\ln E(VIX_T)\nonumber\\
&&=\displaystyle\frac{(\gamma \nu+(1-\gamma)\eta)\sqrt{2H}}{2\Delta}\left( \int_0^{\Delta}u^{H-\frac12}\exp(-\beta u)du\right)\nonumber\\
&=&\frac{(\gamma \nu+(1-\gamma)\eta)\sqrt{2H}}{2\Delta} \beta^{-H-\frac12}\Gamma_{low}\left(H+\frac12,\beta\Delta\right),
\end{eqnarray}
where $\Gamma(\cdot)$ is the Gamma function and $\Gamma_{low}(1+\alpha,x)=\int_0^xt^{-\alpha}e^{-t}dt$ is the lower incomplete Gamma function.\end{example}
In Figure \ref{fig:VIXATMI} we present numerical approximations with $\gamma=\eta=0$ that support the theoretical short time limit. In order to produce the Monte Carlo estimate, Algorithm 3.9 in Jacquier, Martini and Muguruza \cite{JMM17} and $T=10^{-4}$ was used. 

\begin{figure}[H]
\includegraphics[scale=0.8]{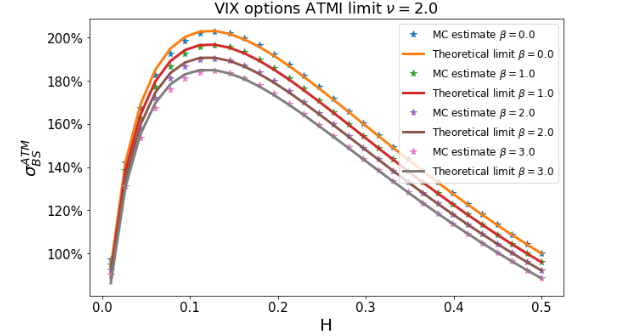}
\caption{VIX ATMI Short time limit in an exponential $\mathcal{TBSS}$ with $\nu=2$}
\label{fig:VIXATMI}
\end{figure}

\begin{remark}
If we consider the SABR case, i.e. $H=\frac{1}{2}$ and $\beta=0$ in \eqref{eq:ATMIVIX}, we obtain $\lim_{T\to 0}I_t^T=\frac{\nu}{2}$. In particular, we notice that this limit is not affected by the window size $\Delta$.
\end{remark}
\subsubsection{A semi close-form formula for the ATMI level of VIX}
We consider a small $T$ approximation of the limit given in \eqref{eq: generalformATMIVIX}, which turns out to be very sharp even for relatively long maturities (up to 5 years in some cases). We consider the following approximation inspired by  \eqref{eq: generalformATMIVIX}:
\begin{equation}\label{eq: close-form VIX ATMI}
I_T\approx \frac{f'(Y_0)}{v_0 2\Delta T^{\frac{1}{2}}}\sqrt{\int_0^T\left(\int_T^{T+\Delta}(u-s)^{H-\frac{1}{2}}g(u-s) du\right)^2 ds}
\end{equation}
In Figures \ref{fig:VIXATMI_formula} and \ref{fig:VIXATMI_formula2} we use formula  \eqref{eq: close-form VIX ATMI} with Model \ref{MixedTBSS}. As previously mentioned we observe that formula \ref{eq: close-form VIX ATMI} indeed performs very well, specially in the case $\gamma=\eta=0$ even for relatively large maturities.
\begin{remark} In practice, one may use approximation \eqref{eq: close-form VIX ATMI} to control the ATMI level of VIX in our model. Moreover, in most cases (SABR, exponential OU, exponential fBm, etc.)  \eqref{eq: close-form VIX ATMI} admits a close-form formula without any numerical integration. 
\end{remark}

\begin{figure}[H]
\includegraphics[scale=0.6]{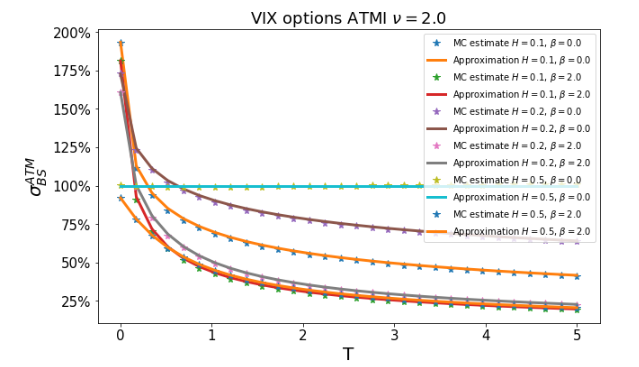}
\caption{VIX ATMI in a Generalized rough volatility model with $\nu=2$}
\label{fig:VIXATMI_formula}
\end{figure}
\begin{figure}[H]
\includegraphics[scale=0.6]{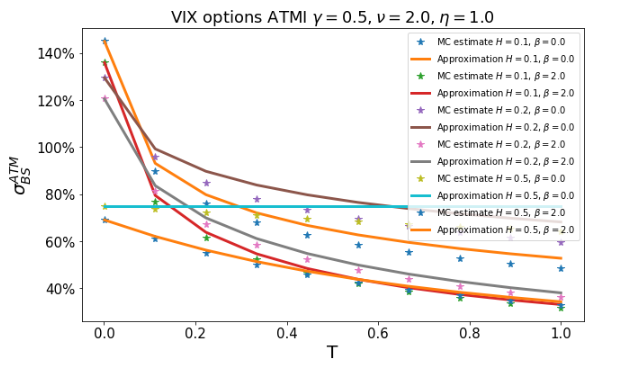}
\caption{VIX ATMI in a Mixed Generalized rough volatility model with $(\gamma,\nu,\eta)=(1/2,2,1)$}
\label{fig:VIXATMI_formula2}
\end{figure}

\subsection{ATMI skew of VIX options}
The following result proves that, for models based on $\mathcal{TBSS}$, the ATMI VIX skew if of the order $O(1)$ as time to maturity tends to zero.
\begin{proposition}\label{prop:fbmATMISkewVIX}
Consider an instantaneous variance model of the form 
$$v_t=v_0f(Y_t)$$ where $Y_t=\int_0^t (t-s)^{H-1/2} \exp(-\beta(t-s))dW_s$ for some $H\leq \frac12$ and some $\beta\geq0$, and where $f$ is a  function in $\mathcal{C}^2$ such that $f, f',f''\in L^p$, for all $p>1$. Then, the ATMI skew for VIX options is given by:
\begin{eqnarray*}
\displaystyle\lim_{T\rightarrow 0}\frac{\partial I_0^T}{\partial k}
(\ln E(VIX_T))=\frac{1}{2}\left(\frac{G(H,\Delta,\beta)}{J(H,\Delta,\beta)}\frac{f''(Y_0)}{f'(Y_0)}-\frac{J(H,\Delta,\beta)}{\Delta}\frac{f'(Y_0)}{(M_0^0) VIX_0 }\right).
\end{eqnarray*}
where
\begin{eqnarray*}G(H,\Delta,\beta)&&=\begin{cases} (2\beta)^{-2H}\Gamma_{low}(2H,2\beta\Delta) &\text{ if } \beta>0 \\
\displaystyle \frac{\Delta^{2H}}{2H} & \text{ if } \beta=0

\end{cases}\\
J(H,\Delta,\beta)&&=\begin{cases} \beta^{-H-1/2}\Gamma_{low}(H+1/2, \beta\Delta) &\text{ if } \beta>0 \\
\displaystyle \frac{\Delta^{H+1/2}}{H+1/2} & \text{ if } \beta=0

\end{cases}.
\end{eqnarray*}

\end{proposition}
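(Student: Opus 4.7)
The proof applies Theorem \ref{theadaptation} to $A=VIX_T$, which for this class of models gives $\lambda=0$ (as asserted in Model \ref{MixedTBSS}, which covers the present setup with $\gamma=\eta=0$). Thus the target limit equals
$$
\lim_{T\to 0}\frac{\partial I_0^T}{\partial k}(\ln E(VIX_T))
 = \tfrac{1}{2}\lim_{T\to 0}\frac{E\bigl[\int_0^T \phi_s \int_s^T D_s\phi_u^2\,du\,ds\bigr]}{(u_0^T)^3\,T^2},
$$
and by (\ref{volVIX}) the integrand volatility is $\phi_t=(2\Delta M_t^T)^{-1}\,E_t\bigl[VIX_T^{-1}\int_T^{T+\Delta} D_t v_r\,dr\bigr]$. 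The plan is to identify the small-$T$ leading order of both numerator and denominator and extract the two terms in the stated formula.

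The next step is to expand the Malliavin derivatives that enter. Since $Y_r$ is a Wiener integral, its second Malliavin derivative vanishes, so writing $k(\tau):=\tau^{H-1/2}e^{-\beta\tau}$ the chain rule gives $D_sv_r=v_0 f'(Y_r)k(r-s)$ and $D_sD_u v_r=v_0 f''(Y_r)k(r-s)k(r-u)$. Applying the quotient rule to $\phi_u=m(T,u)/M_u^T$ together with the product rule on $VIX_T^{-1}\cdot\int D_u v_r\,dr$ decomposes $D_s\phi_u$ into three parts: a \emph{curvature} piece involving $D_sD_u v_r$, and two \emph{vega} pieces each involving $D_s VIX_T$ (once from differentiating $VIX_T^{-1}$ inside the conditional expectation and once from $D_s M_u^T=E_u(D_s VIX_T)$, using Proposition 1.2.8 of Nualart).

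To pass to the limit $T\to 0$, I exploit continuity of $f,f',f''$ and the $L^p$-bounds (hypotheses (H1')--(H6), which Model \ref{MixedTBSS} is designed to satisfy) to replace $f^{(j)}(Y_r)\to f^{(j)}(Y_0)$, $VIX_T\to VIX_0$, and $M_u^T\to M_0^0$ uniformly over $s,u\in[0,T]$ and $r\in[T,T+\Delta]$. The surviving deterministic integrals produce exactly $J(H,\Delta,\beta)=\int_0^\Delta k(r)\,dr$ where $k$ appears once and $G(H,\Delta,\beta)=\int_0^\Delta k(r)^2\,dr$ in the curvature term where $k(r-s)k(r-u)\to k(r)^2$ at $s,u=0$. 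The outer double integral $\int_0^T\!\int_s^T du\,ds=T^2/2$ cancels the $T^2$ in the denominator, while $(u_0^T)^3\to\phi_0^3$ with $\phi_0=f'(Y_0)J/(2\Delta\,M_0^0 VIX_0)$ (Proposition \ref{PROP: ATMIVIX} / Corollary \ref{ATMIlimitcor}) cancels the $\phi_s\phi_u\to\phi_0^2$ factor. A direct calculation then shows the two vega contributions coincide at leading order and combine to yield $-\tfrac{J}{\Delta}\tfrac{f'(Y_0)}{M_0^0 VIX_0}$, while the curvature term gives $\tfrac{G}{J}\tfrac{f''(Y_0)}{f'(Y_0)}$; together with the prefactor $\tfrac12$ this is the claim.

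The main technical hurdle will be justifying the interchange of the limit $T\to 0$ with the expectations, conditional expectations, and the $(s,u,r)$-integrals, because the kernel $k(\tau)$ has an integrable singularity at $0$ and one must propagate it through three layers of Malliavin derivatives. On the relevant domain $r\in[T,T+\Delta]$, $s,u\in[0,T]$ one has $r-s,r-u\geq 0$, and hypotheses (H5)--(H6) supply precisely the uniform bounds needed for dominated convergence. A secondary subtlety is confirming that the two vega pieces genuinely share the same leading order: this follows from the identity $D_s VIX_T=(2\Delta VIX_T)^{-1}\int_T^{T+\Delta} E_T(D_s v_r)\,dr$ combined with the tower property, which makes $E_u(D_s VIX_T)$ collapse to the same limit as $D_s VIX_T$ itself when $s<u<T\to 0$.
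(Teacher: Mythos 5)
Your proposal follows essentially the same route as the paper's Appendix~\ref{sec:skewVIXOptions}: apply Theorem~\ref{theadaptation} with $\lambda=0$, split $D_s\phi_u^2$ via the quotient rule into a curvature term (containing $D_sD_uv_r$, which yields $G$) and two vega terms that coincide in the limit (the paper's $\frac{1}{M_s^T}+\frac{1}{VIX_T}\to\frac{2}{VIX_0}$ step, yielding $J$), then pass to the limit using continuity of the kernel and the $L^p$ hypotheses so that $\int_0^T\!\int_s^T du\,ds=T^2/2$ cancels the normalisation and $(u_0^T)^3\to\phi_0^3$ absorbs the remaining $\phi$ factors. This matches the paper's $A(T,s,u)+B(T,s,u)$ decomposition and limiting computations, so no further comparison is needed.
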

\begin{proof}
The proof is postponed to Appendix \ref{sec:skewVIXOptions} to ease the flow of the paper.
\end{proof}
\begin{example}[Mixed Generalized rough volatility model]
 We consider Model \ref{MixedTBSS}, then we have that the ATMI skew for RV options is given by
\begin{eqnarray*}
&&\lim_{T\rightarrow 0}\frac{\partial I_0^T}{\partial k}
(\ln E(VIX_T))\nonumber\\
&&=\frac{\sqrt{2H}}{2}\left(\frac{\left(\gamma\nu^2+(1-\gamma)\eta ^2\right)G(H,\Delta,\beta)}{\left(\gamma\nu+(1-\gamma)\eta\right)J(H,\Delta,\beta)}-\frac{\left(\gamma\nu+(1-\gamma)\eta\right)J(H,\Delta,\beta)}{\Delta}\right).
\end{eqnarray*}

\end{example}
Figure \ref{fig:ATMIKsewLimitVIX} shows the accuracy of the asymptotic limit with a Monte Carlo benchmark of $T=10^{-4}$ and $\gamma=\eta=0$. The derivative was approximated using a central difference scheme.
\begin{figure}[H]
\includegraphics[scale=0.6]{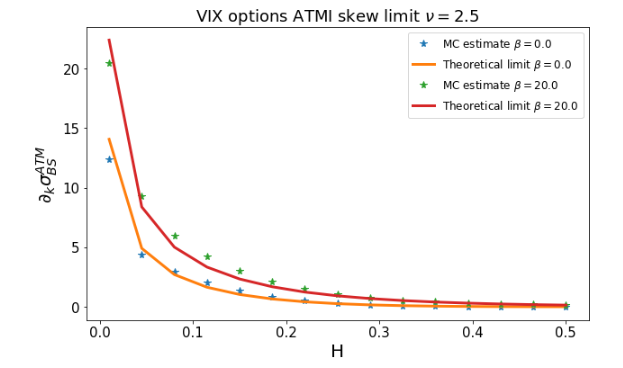}
\caption{ Generalized rough volatility model ATMI skew on VIX options with $\nu=2.5$}
\label{fig:ATMIKsewLimitVIX}
\end{figure}
\subsubsection{The sign of the VIX skew}
As indicated in Remark \ref{remark11}, the positivity of the short-time VIX skew is linked  to the positivity of
\begin{eqnarray}
D_s m(T,t)M_t^T-m(T,t)D_sM_t^T.\nonumber
\end{eqnarray}
In this section we will apply this criteria to study the short-time VIX skew corresponding to the Heston model and the SABR model. 
\begin{example}[Heston]
\label{Heston}
For the Heston model, we have that
$$
dv_t=k(\theta-v_t)+\nu\sqrt{v_t}dW_t,
$$
for some positive constants $k, \theta$ and $\nu$. Notice that, if the Feller condition $2k\theta>\nu^2$ holds, the Heston process is positive. For the sake of simplicity, we will assume $v_0=\theta$. Then we have
\begin{eqnarray}
\label{skew_heston}
&&D_s m(T,t)M_t^T-m(T,t)D_sM_t^T\nonumber\\
&&\to \frac{\nu^2(1-\exp(-k\Delta))}{4k\Delta}\left(1-\frac{2(1-\exp(-k\Delta))}{k\Delta}   \right),
\end{eqnarray}
in $L^2(\Omega)$, as $T\to 0$.
\end{example} 
\begin{proof}The computation of the limit is given in Appendix \ref{VIXSkewHeston}.
\end{proof}\\\\
Under reasonable market conditions for the reversion level $k$, we have that (\ref{skew_heston}) is negative. Then, condition (\ref{criteria}) holds with $f(t,s)=1$, which implies that the corresponding VIX skew is negative.
\begin{example}[SABR]
Consider the SABR model given by $v=\sigma^2$, where
$$
d\sigma_t=\alpha\sigma_tdW_t,
$$
for some positive constant $\alpha$. Then it follows that, for $s<t$, $D_sv_t=2\alpha v_t$. This implies that
$$
\lim_{t\to 0} M_t^T=v_0, \lim_{s,t\to 0} D_sM_t^T=2\alpha v_0, \lim_{s,t\to 0} D_sm_t^T=4\alpha^2v_0,
$$
from where we deduce that 
$$
D_s m(T,t)M_t^T-m(T,t)D_sM_t^T\to 0.
$$
This gives us that the SABR model generates a short-time flat skew.
\end{example}
\subsubsection{Lognormal models and flat VIX smiles}
Based on Figure \ref{fig:ATMIKsewLimitVIX} one may conclude that for small values of $H$ one can achieve large skews. Nevertheless we show in Figure \ref{fig:VIXSkewDecay} the the skew decays almost immediately. This behaviour is consistent with the fact that lognormal models generate flat smiles, hence the skew should be zero. On the contrary, Figure \ref{fig:VIXSkewDecay2} shows the effect of Mixing lognormals, which produces a much higher level of skew. In particular we notice that in the mixed SABR case ($H=1/2$ and $\beta=0$), the skew is given by
\begin{eqnarray}\label{SABRVIXSkew}
&&\lim_{T\rightarrow 0}\frac{\partial I_0^T}{\partial k}
(\ln E(VIX^{SABR}_T))\nonumber=\frac{\sqrt{2H}}{2}\left(\frac{\left(\gamma\nu^2+(1-\gamma)\eta ^2\right)}{\left(\gamma\nu+(1-\gamma)\eta\right)}-\left(\gamma\nu+(1-\gamma)\eta\right)\right).
\end{eqnarray}
We clearly have that \eqref{SABRVIXSkew} is non-zero unless $\gamma\in\{0,1\}$ or $\nu=\eta$. Moreover, in Figure  \ref{fig:VIXSkewDecay2} the skew is constant at this level for maturities up to 6 month. 
In order to obtain the approximating formulas we are inspired by the computations in Appendix \ref{sec:skewVIXOptions}, which yield the following expression before taking the limit:
\begin{eqnarray*}
&&\frac{\partial I_0^T}{\partial k}
(\ln E(VIX_T))\\&&\approx\frac{\sqrt{2H}}{\sqrt{T}\sqrt{\int_0^TK^2(T,\Delta,u)du}}\nonumber\\
&&\times\left(\frac{\left(\gamma\nu^2+(1-\gamma)\eta ^2\right)\int_0^{T}K(T,\Delta,s)\int_s^T K(T,\Delta,u)I(\Delta,T,s,u) du ds}{\left(\gamma\nu+(1-\gamma)\eta\right)}\right.\nonumber\\
&&\left.\hspace{0.3cm}-\frac{\left(\gamma\nu+(1-\gamma)\eta\right)\int_0^{T}K^2(T,\Delta,s)\int_s^T K^2(T,\Delta,u)du ds}{\Delta}\right).
\end{eqnarray*}
where $K(\cdot)$ and $I(\cdot)$ are defined in Appendix  \ref{sec:skewVIXOptions}. We observe that the approximation performs very well for maturities smaller than 6 months and could be useful to control the ATMI skew level of VIX in a given model.

\begin{figure}[H]
\includegraphics[scale=0.7]{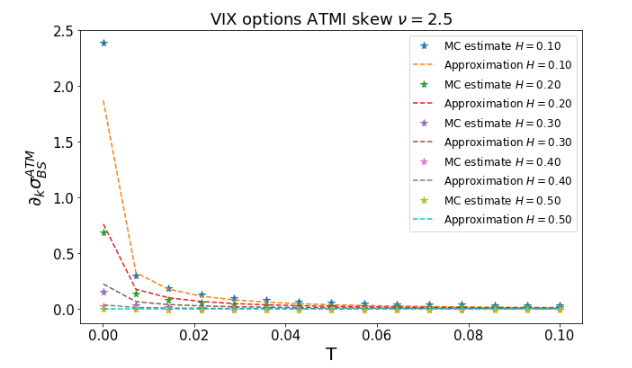}
\caption{ Generalized rough volatility model ATMI skew with $\nu=2.5$}
\label{fig:VIXSkewDecay}
\end{figure}

\begin{figure}[H]
\includegraphics[scale=0.6]{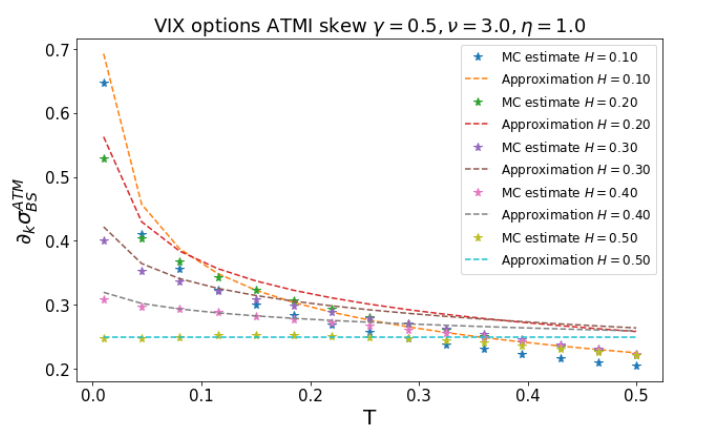}
\caption{Mixed generalized rough volatility model ATMI skew with $(\gamma,\nu,\eta)=(1/2,3,1)$}
\label{fig:VIXSkewDecay2}
\end{figure}

\section{Realized variance options}
This section is devoted to study the ATMI short-time level and skew of RV options. Figure \ref{fig:marketATMI} shows a empirical term structure of the ATMI, which is usually close to a power law.
\begin{figure}[H]
\includegraphics[scale=0.6]{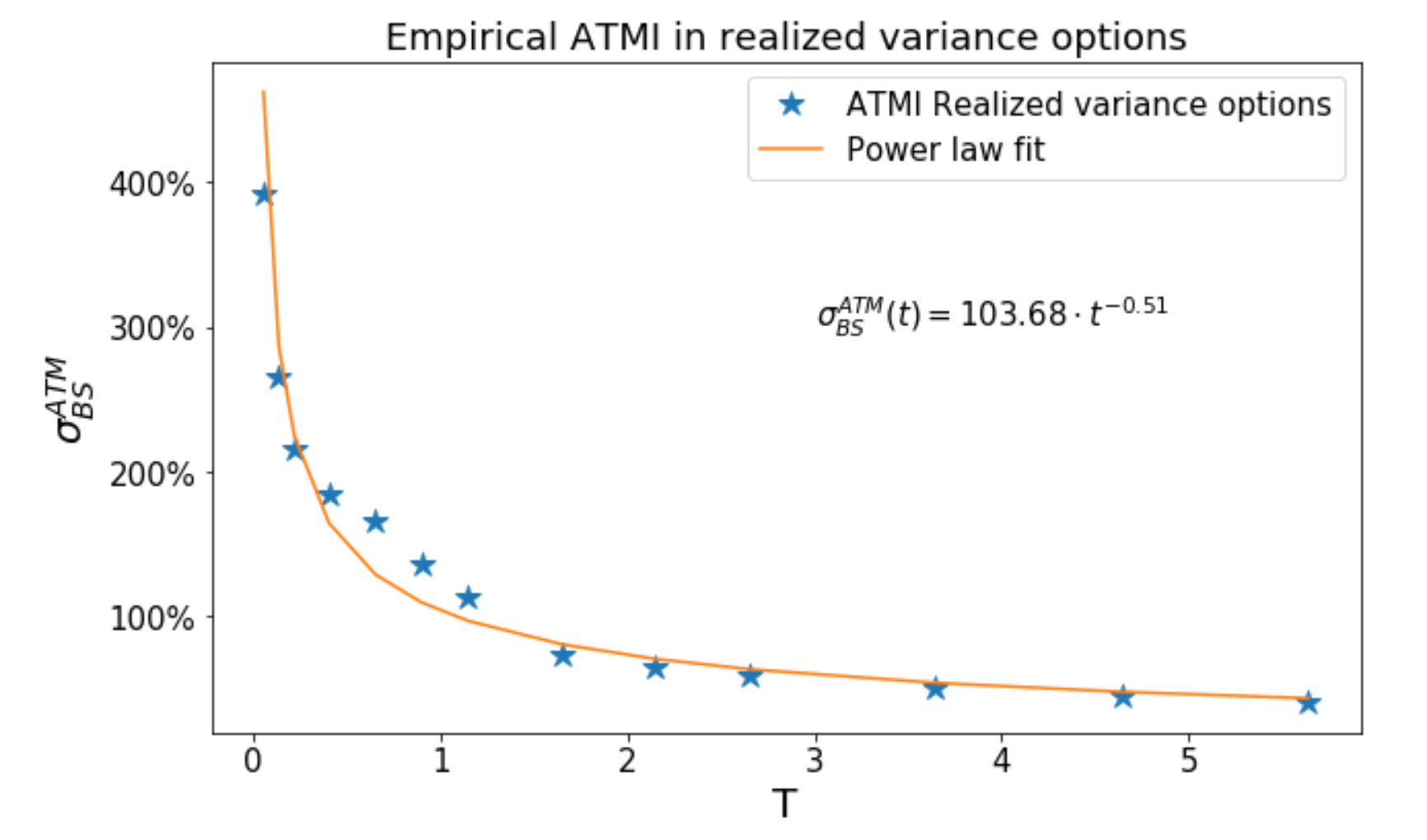}
\caption{Realized variance ATMI for S\&P 500 on February 27, 2018.}
\label{fig:marketATMI}
\end{figure}
\subsection{ATMI on realized variance options}
\begin{proposition}\label{PROP: ATMIRV}
Consider the process $v=f(Y)$,   where $f$ is a  function in $\mathcal{C}^2$ such that $f, f',f''\in L^p$, for all $p>1$, and $Y=\int_0^t (t-s)^{H-\frac12}g(t-s) dW_s$, where $H<\frac12$ and $g\in\mathcal{C}^1_b$ . Then we have that the short-time limit of the ATMI for RV options is given by
\begin{eqnarray}
\lim_{T\to 0} T^{\frac12-H}I_t^T(\ln E(RV_T))&=\displaystyle \frac{f'(Y_0)g(0)}{(H+\frac12)\sqrt{2H+2}M_0^0}\label{ATMIvar}.
\end{eqnarray}
\end{proposition}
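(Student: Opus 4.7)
The plan is to imitate the argument used for Proposition \ref{PROP: ATMIVIX}, the only substantive difference being that the relevant Malliavin kernel for $RV_T$ is not a conditional expectation of the payoff weighted by $1/VIX$ but instead the direct integral in \eqref{volvar}. First I would verify that hypotheses (H1)--(H4) hold for $A=RV_T$ with $\gamma=H-\tfrac12$. Hypotheses (H1) and (H2) follow from $f,f'\in L^p$ for all $p>1$ and from the strict positivity of $v=f(Y)$, while (H3) reduces, via the chain rule $D_t v_s = f'(Y_s)(s-t)^{H-1/2}g(s-t)$ and the boundedness of $g$, to checking standard singular-kernel estimates of the type used for the Bergomi family in Model \ref{MixedTBSS}. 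Hypothesis (H4) with $\gamma=H-\tfrac12$ is precisely the content of the limit we want to identify.

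Once (H1)--(H4) are in place, Corollary \ref{ATMIlimitcor} reduces the problem to computing
\begin{equation*}
\lim_{T\to 0}\,T^{-H}\,E\sqrt{\int_0^T \phi_s^2\,ds},
\end{equation*}
where $\phi_t$ is the specific volatility from \eqref{volvar}, namely
\begin{equation*}
\phi_t \;=\; \frac{1}{T\,M_t^T}\int_t^T E_t\bigl[f'(Y_s)\bigr]\,(s-t)^{H-\tfrac12}\,g(s-t)\,ds.
\end{equation*}

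Next I would freeze the leading-order behaviour by separating the deterministic singular kernel from the random factor. For small $T$ and for $t<s\le T$, continuity of $Y$ at $0$ (and the fact that $(s-t)^{H-1/2}$ is integrable on $[t,T]$) together with dominated convergence yield $E_t[f'(Y_s)]\to f'(Y_0)$ and $g(s-t)\to g(0)$ in an $L^p$ sense, while $M_t^T\to M_0^0$. Substituting these pointwise limits gives, formally,
\begin{equation*}
\phi_t \;\approx\; \frac{f'(Y_0)\,g(0)}{T\,M_0^0}\,\int_t^T (s-t)^{H-\tfrac12}\,ds
\;=\;\frac{f'(Y_0)\,g(0)}{T\,M_0^0\,(H+\tfrac12)}(T-t)^{H+\tfrac12},
\end{equation*}
so that, after squaring, integrating, and using $\int_0^T (T-t)^{2H+1}dt=T^{2H+2}/(2H+2)$,
\begin{equation*}
\sqrt{\int_0^T \phi_s^2\,ds}\;\approx\;\frac{f'(Y_0)\,g(0)}{M_0^0\,(H+\tfrac12)\sqrt{2H+2}}\,T^{H},
\end{equation*}
which, dividing by $T^H$ and taking expectations, delivers \eqref{ATMIvar}.

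The principal technical obstacle is rigorously passing the limit through the outer expectation and the square root. To control this I would bound $\phi_s$ from above by a random variable of the form $C\,(\sup_{0\le u\le T}|f'(Y_u)|)\,(T-s)^{H+1/2}/T$ and use the $L^p$ hypotheses on $f'$ together with standard moment bounds for the Volterra process $Y$ to secure uniform integrability of $T^{-H}\sqrt{\int_0^T\phi_s^2\,ds}$ as $T\downarrow 0$. Given that, dominated convergence finishes the proof; the rest is the deterministic beta-function computation done above.
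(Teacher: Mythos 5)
Your proposal is correct and follows essentially the same route as the paper: verify the hypotheses, invoke Corollary \ref{ATMIlimitcor}, substitute $D_sv_u=f'(Y_u)(u-s)^{H-1/2}g(u-s)$ into \eqref{volvar}, freeze $E_s[f'(Y_u)]$, $g$, and $M_s^T$ at their time-zero values, and finish with the same two elementary integrals $\int_s^T(u-s)^{H-1/2}du$ and $\int_0^T(T-s)^{2H+1}ds$. Your added remarks on uniform integrability when passing the limit through the expectation and square root are a welcome detail the paper leaves implicit.
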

\begin{proof}
We may directly apply Theorem \ref{ATMIlimit} since (H1), (H2) and (H3) hold with $\lambda=H-\frac12$. Thus,
\begin{eqnarray}
\label{ATMIvar}
&&\lim_{T\to 0}T^{\frac12-H} I_t^T(\ln E(RV_T))\nonumber\\
&&= \lim_{T\to 0}\frac{1}{T^{H}}E\sqrt{\int_0^T\phi^2_sds}\nonumber\\
&&=\lim_{T\to 0}\frac{1}{T^{1+H}}E\sqrt{\int_0^T\left(\frac{1}{M_s^T} \int_s^{T}E_s (D_s(v_u))du\right)^2ds}\nonumber\\
&&=\frac{f'(Y_0)g(0)}{M_0^0}\lim_{T\to 0}\frac{1}{T^{1+H}}E\sqrt{\int_0^T\left(\int_s^{T}(u-s)^{H-\frac12}du\right)^2ds}\nonumber\\
&&=\frac{f'(Y_0)g(0)}{(H+\frac12)M_0^0}\lim_{T\to 0}\frac{1}{T^{1+H}}E\sqrt{\int_0^T\left(T-s\right)^{2H+1}ds}\nonumber\\
&&=\frac{f'(Y_0)g(0)}{(H+\frac12)\sqrt{2H+2}M_0^0}.
\end{eqnarray}
\end{proof}
\begin{remark}
We emphasize that (\ref{ATMIvar}) implies that, for short maturities the ATMI is of order $O(T^{H-\frac12})$, which is consistent with real market data on equity markets as shown in Figure \ref{fig:marketATMI}.
\end{remark}

\begin{example}[Mixed Generalized Rough volatility]
\label{eq:ATMIExtendedRoughVl}
We consider again the Model \ref{MixedTBSS}, applying \ref{der} we obtain the following short-time ATMI limit:
\begin{eqnarray*}
&&\lim_{T\to 0}T^{1/2-H} I_t^T(\ln E(RV_T))\nonumber\\
&&=(\gamma\nu+(1-\gamma)\eta)\sqrt{2H}\lim_{T\to 0}\frac{1}{T^{H+1}}\sqrt{\int_0^T\left(\int_s^T\exp\left(-\beta(u-s)\right) (u-s)^{H-1/2} du\right)^2ds}\nonumber\\
&&=\frac{(\gamma\nu+(1-\gamma)\eta)\sqrt{2H}}{(H+\frac12)\sqrt{2H+2}}.
\end{eqnarray*}
\end{example}
\subsubsection{A semi close-form formula for the ATMI level of RV options}
Once again, we consider a short-time approximation of the ATMI level, inspired by the short-time limit given in Corollary \ref{ATMIvar}. More precisely, we consider the following approximation:
\begin{equation}\label{eq: close-form RV ATMI}
I_T\approx \frac{f'(Y_0)}{v_0 T^{\frac{3}{2}}}\sqrt{\int_0^T\left(\int_s^{T}(u-s)^{H-\frac{1}{2}}g(u-s) du\right)^2 ds}
\end{equation}
In Figure \ref{fig:VarianceATMI} we present the numerical results for the approximation for different values of $H$ and $\beta$ in Model \ref{MixedTBSS} with $\gamma=\eta=0$. We observe a very sharp fit to the Monte Carlo estimates obtained using the rDonsker scheme introduced in Horvath, Jacquier and Muguruza \cite{HJM17}.
\begin{remark}
In light of the accuracy of the approximating scheme given in \eqref{eq: close-form RV ATMI}, this can be used in practice to both control the ATMI level in a model and calibrate the parameters $H$ and $\beta$. In particular, when $\beta=0$ this reduces to a power-law fit of order $H-\frac{1}{2}$ as previously shown in Example \ref{eq:ATMIExtendedRoughVl}.
\end{remark}
\begin{figure}[H]
\hspace*{-1.5cm}\includegraphics[scale=0.8]{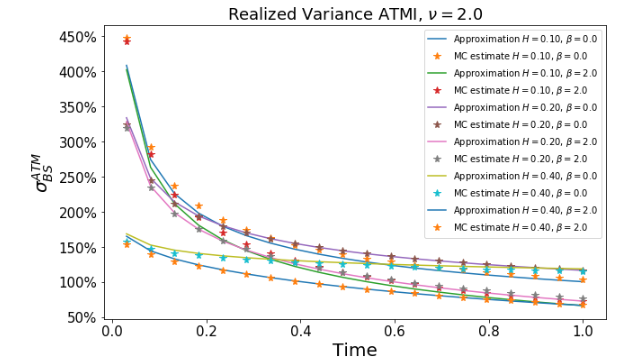}
\caption{Realized variance option ATMI with $\nu=2$}
\label{fig:VarianceATMI}
\end{figure}
\subsection{ATMI skew for realized variance options}
\begin{proposition}\label{prop:fbmATMISkewVariance}
Consider an instantaneous variance model of the form 
$$v_t=v_0f(Y_t)$$ where $Y_t=\int_0^t (t-s)^{H-1/2} dW_s$ and $f$ is a  function in $\mathcal{C}^2$ such that $f, f',f''\in L^p$, for all $p>1$, and $Y=\int_0^t (t-s)^{H-\frac12}g(t-s) dW_s$, where $H<\frac12$ and $g\in\mathcal{C}^1_b$
 Then, the ATMI skew for RV options is given by
\begin{eqnarray*}
&&\lim_{T\rightarrow 0}T^{\frac12-H}\frac{\partial I_0^T}{\partial k}
(\ln E(RV_T))\\
&&= \displaystyle\displaystyle \left(\frac{f''(Y_0)\mathcal{I}(H)(2H+2)^{3/2}(H+1/2)}{f'(Y_0)}-\frac{f'(Y_0)}{v_0(2H+1)\sqrt{(2H+2)}}\right).
\end{eqnarray*}
\end{proposition}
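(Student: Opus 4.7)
The plan is to apply the general skew formula of Theorem \ref{theadaptation} with $A = RV_T$ and $\lambda = H-\tfrac12$. The right-hand side is
\[
\frac{1}{2}\lim_{T\to 0}\frac{E\bigl[\int_0^T \phi_s \int_s^T D_s\phi_u^2\,du\,ds\bigr]}{(u_0^T)^{3}\,T^{2+\lambda}},
\]
so everything reduces to computing the small-$T$ asymptotics of $\phi$, of $D_s\phi_u$, and of $u_0^T$. From Proposition \ref{PROP: ATMIRV} (and its proof) we already have $u_0^T \sim C_0 T^{H+\frac12}$ with $C_0 = f'(Y_0)/[v_0(H+\tfrac12)\sqrt{2H+2}]$, which pins down the scaling of the denominator; the work is therefore to compute the leading order of the numerator and match constants.

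First I would expand $\phi_t$ and $D_s\phi_u$ using the Malliavin chain rule on $v_r = v_0 f(Y_r)$: $D_s v_r = v_0 f'(Y_r)(r-s)^{H-1/2}g(r-s)$ and $D_s D_t v_r = v_0 f''(Y_r)(r-s)^{H-1/2}(r-t)^{H-1/2}g(r-s)g(r-t)$. Substituting into (\ref{volvar}) and the representation of $D_s\phi_u$ from Remark \ref{rem:Dsphi},
\[
D_s\phi_u \;=\; \frac{E_u(D_s D_u A)\,E_u(A)-E_u(D_uA)\,E_u(D_sA)}{(E_u(A))^{2}},
\]
the Malliavin derivative naturally splits as $D_s\phi_u = \Psi^{(1)}_{s,u} + \Psi^{(2)}_{s,u}$, where $\Psi^{(1)}$ comes from the $f''$ piece of $D_sD_u A$ and $\Psi^{(2)}$ comes from the product $D_uA\cdot D_sA/E_u(A)$ involving $(f')^2$. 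Freezing $Y_r$ at $Y_0$ (justified by $L^p$-continuity of $f,f',f''$ and dominated convergence, using hypothesis (H1')) one obtains
\[
\Psi^{(1)}_{s,u} \to \frac{f''(Y_0)}{T}\int_u^T (r-s)^{H-\frac12}(r-u)^{H-\frac12}\,dr,
\quad
\Psi^{(2)}_{s,u} \to -\frac{(f'(Y_0))^{2}}{v_0 T^{2}}\Bigl(\int_u^T(r-u)^{H-\frac12}dr\Bigr)\Bigl(\int_s^T(r-s)^{H-\frac12}dr\Bigr),
\]
and $\phi_s \to \frac{f'(Y_0)}{v_0 T(H+\frac12)}(T-s)^{H+\frac12}$.

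Plugging these into $\int_0^T \phi_s \int_s^T 2\phi_u D_s\phi_u \,du\,ds$, changing variables $s = Tx$, $u = Ty$, $r = Tz$, and collecting powers of $T$ yields a leading order $T^{3+4H}$, which is exactly $(u_0^T)^{3}T^{2+\lambda}/C_0^{3}$ up to the constant $C_0^{3}$ that matches the $\sqrt{2H+2}(H+\tfrac12)$ factors in the denominator. The $\Psi^{(1)}$ contribution reduces to a purely deterministic triple integral $\mathcal{I}(H) = \int_0^1\!\int_x^1\!\int_y^1 (1-x)^{H+\frac12}(z-x)^{H-\frac12}(z-y)^{H-\frac12}\,dz\,dy\,dx$ (up to the appropriate normalisation by $(H+\tfrac12)$ and $(2H+2)^{3/2}$), producing the first term of the stated limit; the $\Psi^{(2)}$ contribution evaluates to a Beta-function integral that simplifies to $-\frac{f'(Y_0)}{v_0(2H+1)\sqrt{2H+2}}$.

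The main obstacle is the bookkeeping of the triple integral that defines $\mathcal{I}(H)$ and a careful verification that hypotheses (H5)--(H6) hold so Theorem \ref{theadaptation} applies: one must bound $\int_s^T D_s\Theta_r\,dr$ and the iterated $\Theta$-terms uniformly in $T$. These bounds follow by $L^p$-estimates on $f,f',f''$ combined with the Hardy-type singularity $(r-s)^{H-\frac12}$ being integrable for $H>0$; once that is in place the dominated convergence step in the two contributions above is justified, and the constants rearrange to give the displayed formula.
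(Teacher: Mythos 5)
Your proposal follows essentially the same route as the paper's Appendix C: apply Theorem \ref{theadaptation} with $\lambda=H-\tfrac12$, split $D_s\phi_u^2$ into an $f''$-piece and an $(f')^2$-piece (your $\Psi^{(1)},\Psi^{(2)}$ are exactly the paper's $A(T,s,u)$ and $B(T,s,u)$), freeze the process at $Y_0$, and reduce each contribution to a deterministic scale-invariant integral, with $\mathcal{I}(H)$ identified as the limit of the $f''$-integral. One slip to correct: from Proposition \ref{PROP: ATMIRV} the right order is $u_0^T\sim C_0\,T^{H-\frac12}$ (not $T^{H+\frac12}$; indeed $\phi_s\sim T^{H-\frac12}$ by your own displayed formula), and correspondingly the numerator is $O(T^{4H})$ rather than $O(T^{3+4H})$ — the two errors cancel in the ratio $E[\cdot]/\bigl((u_0^T)^3T^{2+\lambda}\bigr)$, so the final constant is unaffected.
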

\begin{proof}
The proof and definition of $\mathcal{I}(H)$ are postponed to Appendix \ref{sec:skewVarianceOptions} to ease the flow of the paper.
\end{proof}
\begin{example}[Mixed Rough Bergomi]
We consider the Mixed rough Bergomi model as a subclass of the models presented in Model \ref{MixedTBSS} with instantaneous variance given by:
$$ v_{t} = v_{0}\gamma \mathcal{E}\left( \nu\sqrt{2H}\int_{0}^{t} (t-s)^{H-1/2} dW_{s} \right) +(1-\gamma)\mathcal{E}\left( \eta\sqrt{2H}\int_{0}^{t} (t-s)^{H-1/2} dW_{s} \right),$$ Applying Proposition \ref{prop:fbmATMISkewVariance} we have that the ATMI skew for RV options is given by
\begin{eqnarray*}
&&\displaystyle\lim_{T\rightarrow 0}\displaystyle T^{\frac12-H}\frac{\partial I_0^T}{\partial k}
(\ln E(RV_T))\nonumber\\
&&=\displaystyle \sqrt{2H}\left(\frac{\gamma\nu^2+(1-\gamma)\eta^2}{\gamma\nu+(1-\gamma)\eta}\mathcal{I}(H)(2H+2)^{3/2}(H+1/2)-\frac{\gamma\nu+(1-\gamma)\eta}{(2H+1)\sqrt{(2H+2)}}\right).
\end{eqnarray*}
 Figures \ref{fig:ATMIVarianceSkew} and \ref{fig:ATMIVarianceSkewMixed} illustrate the accuracy of the asymptotic formula for different values of $H$.\end{example}
\begin{figure}[H]
\includegraphics[scale=0.7]{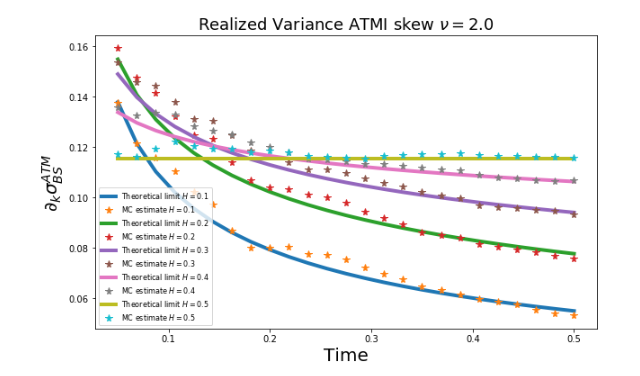}
\caption{ATMI skew short time limit for RV options in a rough Bergomi model with $\nu=2$.}
\label{fig:ATMIVarianceSkew}
\end{figure}
\begin{figure}[H]
\includegraphics[scale=0.7]{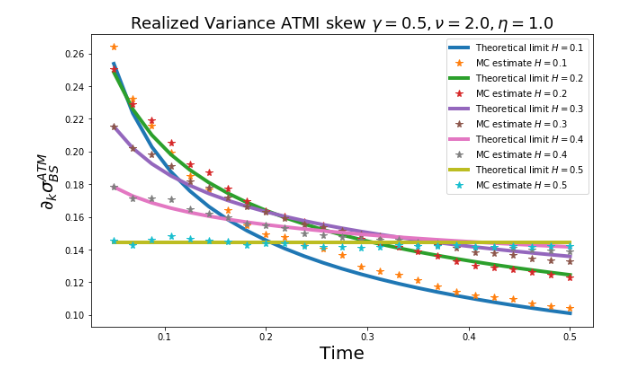}
\caption{ATMI skew short time limit for RV options in a mixed rough Bergomi model with $(\gamma,\nu,\eta)=(1/2,2,1)$.}
\label{fig:ATMIVarianceSkewMixed}
\end{figure}
\appendix
\section{Computations of the ATMI skew on VIX options}\label{sec:skewVIXOptions}
To ease the forthcoming computations we introduce the following functions
\begin{eqnarray*}G(H,\Delta,\beta)&&=\begin{cases} (2\beta)^{-2H}\Gamma_{low}(2H,2\beta\Delta) &\text{ if } \beta>0 \\
\displaystyle \frac{\Delta^{2H}}{2H} & \text{ if } \beta=0

\end{cases}\\
J(H,\Delta,\beta)&&=\begin{cases} \beta^{-H-1/2}\Gamma_{low}(H+1/2, \beta\Delta) &\text{ if } \beta>0 \\
\displaystyle \frac{\Delta^{H+1/2}}{H+1/2} & \text{ if } \beta=0

\end{cases}.
\end{eqnarray*}
Using Remark \ref{rem:Dsphi} we get
\begin{equation}\label{eq: MalliavinDerSigmaSquared}D_s\phi_u^2= \frac{2\phi_uD_s m(T,u)M_u^T}{(M_s^T)^2}-\frac{2\phi_um(T,u)D_sM_u^T}{(M_s^T)^2}.\end{equation}
We recall the reader that in the context of VIX options we have 
$$
\phi_t=\frac{1}{M_t^T2\Delta} E_t \left(\frac{1}{VIX_T}\int_T^{T+\Delta} (D_t v_s)ds\right). 
$$
and we have the limiting behaviour for $0\leq t\leq T$
\begin{eqnarray*}
\lim_{T\to 0}E (\phi_t)&&=\lim_{T\to 0}\frac{f'(Y_0)}{M_0^02\Delta VIX_0} \int_T^{T+\Delta} (r-t)^{H-1/2}\exp(-\beta(r-t)) dr\\&&=:\lim_{T\to 0}\frac{f'(Y_0)}{M_0^02\Delta VIX_0}K(T,\Delta,t)
\\&&=\frac{f'(Y_0)}{M_0^02\Delta VIX_0} J(H,\Delta,\beta).
\end{eqnarray*}
On one hand we have,
\begin{equation}\label{eq: auxB}E_s (D_sM_u^T)=m(T,s).\end{equation}
Hence, we may rewrite \eqref{eq: MalliavinDerSigmaSquared} under unconditional expectation as
\begin{equation}\label{eq: ExpectationMalliavinDerSigmaSquared}E\left(D_s\phi_u^2\right)=E \frac{2\phi_uD_s m(T,u)M_u^T}{(M_s^T)^2}-\frac{2\phi_u m(T,u)\phi_s}{M_s^T}.\end{equation}
On the other hand,
\begin{eqnarray*}
&&D_s m(T,u)\nonumber\\
&&=\displaystyle\frac{1}{2\Delta} D_sE_u\frac{1}{VIX_T}\int_T^{T+\Delta}(D_u(v_r)) dr\\
&&=\frac{1}{2\Delta}E_u\left(\frac{\int_T^{T+\Delta}D_s(D_u(v_r)) dr VIX_T-Ds(VIX_T)\int_T^{T+\Delta}(D_u(v_r)) dr}{(VIX_T)^2}\right)\\
&&=E_u\left(\frac{\int_T^{T+\Delta}D_s(D_u(v_r)) dr}{2\Delta VIX_T}-\frac{\int_T^{T+\Delta}(D_s(v_r)) dr\int_T^{T+\Delta}(D_u(v_r)) dr}{4\Delta^2\Delta(VIX_T)^3}\right)\\
&&=E_u\left(\frac{\int_T^{T+\Delta}D_s(D_u(v_r)) dr}{2\Delta VIX_T}-\frac{m(T,s)m(T,u)}{VIX_T}\right)
\end{eqnarray*}
Therefore we further develop \eqref{eq: ExpectationMalliavinDerSigmaSquared} into 
\begin{eqnarray}\label{eq:DsSigmaSquared}
E\left(D_s\phi_u^2\right)&=&E \left(\frac{-2\phi_u\int_T^{T+\Delta}D_s(D_u(v_r)) dr}{2\Delta VIX_T}-2\phi_u m(T,u)\phi_s\left(\frac{1}{M_s^T}+\frac{1}{VIX_T}\right)\right)\nonumber\\
&:=&A(T,s,u)+B(T,s,u).\end{eqnarray}
Taking limits for the expression $B(T,s,u)$ as in Theorem \ref{theadaptation} up to the normalizing terms and using the definition of $\phi_s$ along with \eqref{eq:DsSigmaSquared}  we obtain
\begin{eqnarray*}
&\displaystyle&\lim_{T\rightarrow 0} E  \displaystyle\int_0^T \phi_s \left(\int_s^T B(T,s,u) du\right)ds\\
=&\displaystyle&\lim_{T\rightarrow 0} E  \displaystyle\int_0^T \phi_s \left(\int_s^T \frac{-4\phi_u\phi_s m(T,u)}{M_0^0} du\right)ds\\
=&\displaystyle &\lim_{T\rightarrow 0}E  \displaystyle\int_0^T \frac{-4\phi_s^2}{M_s^T} \left(\int_s^T \phi_u m(T,u) du\right)ds\\
=&\displaystyle &\lim_{T\rightarrow 0}E \frac{-4(f'(Y_0))^2}{(M_0^0 2\Delta VIX_T)^2} \displaystyle\int_0^T \phi_s^2 \left(\int_s^T K(T,\Delta,u)^2 du\right)ds\\
=&\displaystyle &\lim_{T\rightarrow 0}E \frac{-4(f'(Y_0))^4}{(M_0^0 2\Delta VIX_T)^4} \displaystyle\int_0^T K(T,\Delta,s)^2 \left(\int_s^T K(T,\Delta,u)^2 du\right)ds\\
=&\text{ } & \frac{-4(f'(Y_0))^4}{(M_0^0 2\Delta VIX_0)^4} \displaystyle \frac{T^2}{2} (J(H,\Delta,\beta))^4
\end{eqnarray*}
where in the last step we use $T\geq u \geq s \geq 0$ along with continuity of $K(\cdot)$. Finally, using the ATMI result that gives
$$\lim_{T\rightarrow 0} u_0=\lim_{T\rightarrow 0} E\sqrt{ \frac{1}{T}\int_0^T \phi^2_s}ds=\frac{f'(Y_0)}{2\Delta VIX_0} J(H,\Delta,\beta),$$
we get
\begin{eqnarray}\label{limitB}\frac{1}{2}\displaystyle\lim_{T\rightarrow 0} E  \displaystyle\frac{\int_0^T \phi_s \left(\int_s^T B(T,s,u) du\right)ds}{u_0^3}&=\displaystyle \lim_{T\rightarrow 0}\displaystyle -\frac{1}{2\Delta} \displaystyle\frac{f'(Y_0)}{VIX_0 M_0^0}J(H,\Delta,\beta) T^2\quad
\end{eqnarray}
On the other hand, we proceed similarly with $A(T,s,u)$. First we have
\begin{eqnarray*}
&&\lim_{T\rightarrow 0} E\frac{1}{2\Delta}\frac{E_u\int_T^{T+\Delta}D_s(D_u(v_r)) dr}{VIX_T}\nonumber\\
&&=\frac{f''(Y_0)}{2\Delta VIX_0}\int_T^{T+\Delta}((r-u)(r-s))^{H-1/2}\exp(-\beta(2r-u-s))dr\\
&&=\frac{f''(Y_0)}{2\Delta VIX_0}I(T,\Delta,s,u).
\end{eqnarray*}
Moreover, due to the fact that $s<u<T$, we have
$$\lim_{T\rightarrow 0} E\frac{1}{2\Delta}\frac{E_u\int_T^{T+\Delta}D_s(D_u(v_r)) dr}{VIX_T}=\frac{f''(Y_0)}{2\Delta VIX_0}G(H,\Delta,\beta).$$
Carrying similar computations, as with $B(T,s,u)$, we obtain
\begin{eqnarray*}
&\displaystyle&\lim_{T\rightarrow 0} E  \displaystyle\int_0^T \phi_s \left(\int_s^T A(T,s,u) du\right)ds\\
&=\displaystyle&\lim_{T\rightarrow 0} E  \displaystyle\int_0^T \phi_s \left(\int_s^T \frac{2\phi_uD_s m(T,u)M_u^T du}{(M_s^T)^2}\right)ds\\
&=\displaystyle&\lim_{T\rightarrow 0} \frac{2}{M_0^0}E  \displaystyle\int_0^T \phi_s  \left(\int_s^T \phi_u D_sm(T,u) du\right)ds\\
&=\displaystyle&\lim_{T\rightarrow 0} \frac{2f''(Y_0)\Delta^{-1}}{M_0^02VIX_0}E  \displaystyle\int_0^T \phi_s  \left(\int_s^T \phi_u I(T,\Delta,u,s) du\right)ds\\
&=\displaystyle&\lim_{T\rightarrow 0} \frac{2f''(Y_0)f'(Y_0)\Delta^{-2}}{(M_0^0)^2 2VIX_0}E  \displaystyle\int_0^T \phi_s  \left(\int_s^T K(T,\Delta,u) I(T,\Delta,u,s) du\right)ds\\
&=\displaystyle&\lim_{T\rightarrow 0} \frac{2f''(Y_0)(f'(Y_0))^2\Delta^{-3}}{(M_0^0)^3 (2VIX_0)^3 } \displaystyle\int_0^T K(T,\Delta,s) \left(\int_s^T K(T,\Delta,u) I(T,\Delta,u,s) du\right)ds\\
&=\displaystyle&\lim_{T\rightarrow 0} \frac{f''(Y_0)(f'(Y_0))^2}{(M_0^0)^3 (2VIX_0)^3 } \Delta^{-3} (J(H,\Delta,\beta))^2 G(H,\Delta,\beta)T^2.
\end{eqnarray*}
were in the last step we use continuity of $I(\cdot)$ and $K(\cdot)$ and the fact $0\leq s \leq u \leq T$. Finally, recalling the ATMI result for $u_0^3$ we get
\begin{eqnarray}\label{limitA}\frac{1}{2}\displaystyle\lim_{T\rightarrow 0} E  \displaystyle\frac{\int_0^T \phi_s \left(\int_s^T A(T,s,u) du\right)ds}{u_0^3}&=\displaystyle\frac{G(H,\Delta,\beta)}{2J(H,\Delta,\beta)}\displaystyle\frac{f''(Y_0)}{f'(Y_0) } T^2
\end{eqnarray}

To result follows directly using Theorem \ref{theadaptation} along with expressions \eqref{limitA} and \eqref{limitB}.
\section{Computations of the ATMI skew sign in Heston}\label{VIXSkewHeston}
We notice that $E_T(v_r)=\theta+(v_T-\theta)\exp(-k(r-T))$. This implies that, for all $t<T$
\begin{eqnarray}
\label{MHestonlim}
&&M_t^T=E_t\sqrt{E_T\frac1\Delta\int_T^{T+\Delta}v_rdr}\nonumber\\
&&=E_t\sqrt{\theta+(v_T-\theta)\frac1\Delta\int_T^{T+\Delta}\exp(-k(r-T))dr}\nonumber\\
&&=E_t\sqrt{\theta+(v_T-\theta)\frac{1-\exp(-k\Delta)}{k\Delta}}\nonumber\\
&&\to E\sqrt{\theta+(v_0-\theta)\frac{1-\exp(-k\Delta)}{k\Delta}}\nonumber\\
&&=\sqrt{v_0}
\end{eqnarray}
in $L^2(\Omega)$, as $T\to 0$.
On the other hand, for all $s<t<T$
\begin{eqnarray}
\label{DMHeston}
D_sM_t^T&=&E_t\left[\frac{1}{2M_t^T}E_T\left(\frac1\Delta\int_T^{T+\Delta}D_sv_rdr\right)\right]\nonumber\\
&=&\frac{1}{2M_t^T}E_t\left(\frac1\Delta\int_T^{T+\Delta}D_sv_rdr\right).
\end{eqnarray}
Now, notice that
\begin{equation} \label{v_t}
v_{r} = v_{0}\exp{(-kr)} + \theta \int_{0}^{r} \exp{(-k(r-u)}) du +\nu \int_{0}^{r} \exp{(-k(r-u))} \sqrt{v_{u}} dWu, 
\end{equation}
which allows us to write
\begin{equation}
\label{derHeston}
D_sv_r=\nu\exp(-k(r-s))\sqrt{v_s}+\nu\int_{s}^{r} \exp{(-k(r-u))} D_s\sqrt{v_{u}} dWu.
\end{equation}
Consequently, it follows that
 $E_t(D_s v_r)\to \nu\exp(-kr)\sqrt{v_0}$ in $L^2(\Omega)$, as $T\to 0$.
Then we get
\begin{eqnarray}
\label{DMHestonlim}
&&D_sM_t^T\to\frac{\nu(1-\exp(-k\Delta)}{2k\Delta }
\end{eqnarray}
in $L^2(\Omega)$ as $T\to 0$. 
On the other hand, (\ref{mgeneral}) allows us to write
\begin{eqnarray}
\label{dermVIX}
D_sm(T,t)&=&\frac{1}{2\Delta} E_t \left(\frac{1}{VIX_T}E_T\int_T^{T+\Delta} (D_s D_t v_r)dr\right)\nonumber\\
&-&\frac{1}{4\Delta} E_t \left(\frac{1}{\Delta(VIX_T)^3}\int_T^{T+\Delta} ( D_t v_r)dr\int_T^{T+\Delta} ( D_s v_r)dr\right)\nonumber\\
&&=:T_1+T_2.
\end{eqnarray}
Let us first study the term $T_1$. It is easy to deduce from (\ref{derHeston}) that, for $s<t<r$
\begin{eqnarray}
&&D_sD_tv_r=\nu\exp(-k(r-t))D_s\sqrt{v_t}\nonumber\\
&&+\nu\int_{t}^{r} \exp{(-k(r-u))}D_t D_s\sqrt{v_{u}} dWu.
\end{eqnarray}
Lemma 5.3 in \cite{AE} gives us that, as $s,t\to 0$, $D_s\sqrt{v_t}\to \frac{\nu}{2}$ in $L^2(\Omega)$. This implies that
\begin{equation}
\label{T1}
E_T(D_sD_tv_r)\to\frac{\nu^2}{2}\exp(-kr),
\end{equation}
which in turn, gives
\begin{equation}
\label{T1}
T_1\to \frac{\nu^2(1-\exp(-k\Delta))}{4k\Delta \sqrt{v_0}}.
\end{equation}
Finally, we can write
\begin{equation}
\label{T2}
T_2\to -\frac{\nu^2(1-\exp(\Delta))^2}{4k^2\Delta^2 \sqrt{v_0}}.
\end{equation}
Then (\ref{DMHestonlim}), (\ref{DMHestonlim}), (\ref{T1}) and (\ref{T2}) give us that
\begin{eqnarray}
\label{skew_heston}
&&D_s m(T,t)M_t^T-m(T,t)D_sM_t^T\nonumber\\
&&\to \frac{\nu^2(1-\exp(-k\Delta))}{4k\Delta}-\frac{\nu^2(1-\exp(-k\Delta))^2}{4k^2\Delta^2 }\nonumber\\
&&-\left(\frac{\nu(1-\exp(-k\Delta)}{2k\Delta }\right)^2\nonumber\\
&&=\frac{\nu^2(1-\exp(-k\Delta))}{4k\Delta}-2\frac{\nu^2(1-\exp(-k\Delta))^2}{4k^2\Delta^2 }\nonumber\\
&&=\frac{\nu^2(1-\exp(-k\Delta))}{4k\Delta}\left(1-\frac{2(1-\exp(-k\Delta))}{k\Delta}   \right).
\end{eqnarray}

\section{Computations of the ATMI skew on realized variance options}\label{sec:skewVarianceOptions}
Using Remark \ref{rem:Dsphi} we get
$$D_s\phi_u^2= \frac{2\phi_uD_s m(T,u)M_u^T}{(M_s^T)^2}-\frac{2\phi_um(T,u)D_sM_u^T}{(M_s^T)^2}=:A(T,s,u)+B(T,s,u)$$
First we will develop $B(T,s,u)$. On one hand we have
\begin{equation}\label{eq: auxB}E_s (D_sM_u^T)=m(T,s).\end{equation}
Taking limits for the expression in Theorem \ref{theadaptation} up to the normalizing terms and using the definition of $\phi_s$ along with \eqref{eq: auxB}  we obtain
\begin{eqnarray*}
&\displaystyle&\lim_{T\rightarrow 0} E  \displaystyle\int_0^T \phi_s \left(\int_s^T B(T,s,u) du\right)ds\\
=&\displaystyle&\lim_{T\rightarrow 0} E  \displaystyle\int_0^T \phi_s \left(\int_s^T \frac{-2\phi_u\phi_s m(T,u)}{M_s^T} du\right)ds\\
=&\displaystyle &\lim_{T\rightarrow 0}E  \displaystyle\int_0^T \frac{-2\phi_s^2}{M_s^T} \left(\int_s^T \phi_u m(T,u) du\right)ds\\
=&\displaystyle &\lim_{T\rightarrow 0}\displaystyle \frac{-2(f'(Y_0))^2}{(M_0^0)^2(H+1/2)^2 T^2}\int_0^T\phi_s^2 \left(\int_s^T (T-u)^{2H+1}du \right) ds\\
=&\displaystyle &\lim_{T\rightarrow 0}\displaystyle \frac{-2(f'(Y_0))^2}{(M_0^0)^2(H+1/2)^2(2H+2) T^2}\int_0^T \phi_s^2(T-s)^{2H+2} ds\\
=&\displaystyle &\lim_{T\rightarrow 0}\displaystyle \frac{-2(f'(Y_0))^4}{(M_0^0)^4(H+1/2)^4(2H+2) (4H+4) T^4} T^{4H+4}\\
=&\displaystyle &\lim_{T\rightarrow 0}\displaystyle \frac{-2(f'(Y_0))^4}{v_0^4(H+1/2)^4(2H+2) (4H+4)} T^{4H}
\end{eqnarray*}
Finally, using the ATMI result that gives
$$\lim_{T\rightarrow 0} u_0=\lim_{T\rightarrow 0} E\sqrt{ \frac{1}{T}\int_0^T \phi^2_s}ds=\frac{|f'(Y_0)|}{(H+1/2)\sqrt{2H+2}v_0} T^{H-1/2},$$
we get
\begin{eqnarray*}\frac{1}{2}\displaystyle\lim_{T\rightarrow 0} E  \displaystyle\frac{\int_0^T \phi_s \left(\int_s^T B(T,s,u) du\right)ds}{u_0^3}&=\displaystyle \lim_{T\rightarrow 0}\displaystyle \frac{-f'(Y_0)(2H+2)^{3/2}}{v_0(2H+1)(2H+2) (4H+4)} T^{H+3/2}\\
&=\displaystyle\lim_{T\rightarrow 0}\displaystyle \frac{-f'(Y_0)}{v_0(2H+1)\sqrt{(2H+2)}} T^{H+3/2}.
\end{eqnarray*}
On the other hand, we proceed similarly with $A(T,s,u)$. First we have
\begin{eqnarray*}D_s m(T,u)&&=\displaystyle\frac{1}{T}\int_u^T E_u(D_s(D_u(v_r))) dr= \frac{1}{T}\displaystyle\int_u^TE_u\left((D_sf'(Y_r)(r-u)^{H-1/2} \right)dr\\
&&=\displaystyle\frac{1}{T}\int_{u}^Tf''(Y_r)((r-u)(r-s))^{H-1/2} dr=:I(T,u,s).
\end{eqnarray*}
\begin{remark}
Notice here that 
$$\lim_{T\rightarrow 0} E( I(T,u,s))=\frac{1}{T}f''(Y_0)\frac{(T-u)^{H+1/2}(u-s)^{H-1/2}}{H+1/2} F\left(\frac{T-u}{s-u}\right)$$
where $F(\cdot)=\text{}_2F_1\left(1/2-H,H+1/2,H+3/2,\cdot\right)$ denotes the Gaussian hypergeometric function.
\end{remark}
Carrying similar computations, as with $B(T,s,u)$, we obtain
\begin{eqnarray*}
&\displaystyle&\lim_{T\rightarrow 0} E  \displaystyle\int_0^T \phi_s \left(\int_s^T A(T,s,u) du\right)ds\\
&=\displaystyle&\lim_{T\rightarrow 0} E  \displaystyle\int_0^T \phi_s \left(\int_s^T \frac{-2\phi_uD_s m(T,u)M_u^T du}{(M_s^T)^2}\right)ds\\
&=\displaystyle&\lim_{T\rightarrow 0} \frac{-2}{M_0^0}E  \displaystyle\int_0^T \phi_s  \left(\int_s^T \phi_u I(T,u,s) du\right)ds\\
&=\displaystyle&\lim_{T\rightarrow 0} \frac{-2  f'(Y_0)}{(M_0^0)^2(H+1/2)T}E  \displaystyle\int_0^T \phi_s \int_s^T I(T,u,s)(T-u)^{H+1/2} du ds\\
&=\displaystyle&\lim_{T\rightarrow 0} \frac{-2 (f'(Y_0))^2 }{(M_0^0)^3(H+1/2)^2T^2}  \int_0^T (T-s)^{H+1/2} \int_s^T I(T,u,s)(T-u)^{H+1/2} du ds\\
&=\displaystyle&\lim_{T\rightarrow 0} \frac{-2 (f'(Y_0))^2f''(Y_0) }{(M_0^0)^3(H+1/2)^2T^3}\nonumber\\
&&\times  \int_0^T (T-s)^{H+1/2} \int_s^T \frac{(T-u)^{2H+1}(u-s)^{H-1/2}}{H+1/2} F\left(\frac{T-u}{s-u}\right) du ds\\
&=\displaystyle&\lim_{T\rightarrow 0} \frac{-2f'(Y_0))^2f''(Y_0) \mathcal{I}(H)}{v_0^3 (H+1/2)^2}  \;T^{4H}
\end{eqnarray*}
where we define $\mathcal{I}(H)$ as the finite limit
\begin{equation}\label{eq: auxA}\mathcal{I}(H)=\lim_{T\rightarrow 0}\displaystyle \frac{\displaystyle\int_0^T (T-s)^{H+1/2} \int_s^T \frac{(T-u)^{2H+1}(u-s)^{H-1/2}}{H+1/2} F\left(\frac{T-u}{s-u}\right) du ds}{T^{4H+3}}.\end{equation}
\begin{remark}
In practice, $\mathcal{I}(H)$ can be easily approximated using numerical integration packages and expression \eqref{eq: auxA} for small $T$.
\end{remark}
Finally, recalling the ATMI result for $u_0^3$ we get
\begin{eqnarray*}&&\frac{1}{2}\displaystyle\lim_{T\rightarrow 0} E  \displaystyle\frac{\int_0^T \phi_s \left(\int_s^T A(T,s,u) du\right)ds}{u_0^3}\\
&&=\displaystyle \lim_{T\rightarrow 0}\displaystyle \frac{f''(Y_0)(2H+2)^{3/2}(H+1/2)\mathcal{I}(H)}{f'(Y_0)} T^{H+3/2}.
\end{eqnarray*}
To conclude the proof we make use of Theorem \ref{theadaptation} and obtain
\begin{eqnarray*}
&\displaystyle\lim_{T\rightarrow 0}&\frac{\partial I_0^T}{\partial k}
(X_0)=\frac{1}{2}\lim_{T\rightarrow 0}E\frac{
\int_{0}^{T}\left( \phi_s \int_s^T D_{s}^W \phi_u^2 du \right) ds }{u _0^{3}T^{2}}\\
=&\displaystyle\lim_{T\rightarrow 0}&\frac{1}{2}\displaystyle\lim_{T\rightarrow 0} E  \displaystyle\frac{\int_0^T \phi_s \left(\int_s^T A(T,s,u) du\right)ds}{u_0^3} +E  \displaystyle\frac{\int_0^T \phi_s \left(\int_s^T B(T,s,u) du\right)ds}{u_0^3}\\
=&\displaystyle\lim_{T\rightarrow 0}&\left(\frac{f''(Y_0)\mathcal{I}(H)(2H+2)^{3/2}(H+1/2)}{f'(Y_0)}-\frac{f'(Y_0)}{v_0(2H+1)\sqrt{(2H+2)}}\right) T^{H-1/2}.
\end{eqnarray*}

\end{document}